\documentclass[11pt]{article}

\usepackage[margin=1in]{geometry}
\usepackage{amsmath, amssymb, amsthm, mathtools}
\usepackage{bm}
\usepackage{hyperref}
\usepackage{enumitem}
\usepackage{algorithm}
\usepackage{algpseudocode}
\usepackage{microtype}
\usepackage{mathpazo}
\usepackage{todonotes}
\usepackage{xspace}

\usepackage{subcaption}
\usepackage{thmtools}
\usepackage{thm-restate}
\usepackage[dvipsnames]{xcolor}
\usepackage{tabularray}
\usepackage{siunitx}
\usepackage[british]{babel}

\newtheorem{theorem}{Theorem}
\newtheorem{lemma}{Lemma}
\newtheorem{proposition}{Proposition}
\newtheorem{definition}{Definition}
\newtheorem{remark}{Remark}

\newcommand{\C}{\mathbb{C}}

\newcommand{\ket}[1]{\left|#1\right\rangle}
\newcommand{\bra}[1]{\left\langle#1\right|}
\newcommand{\braket}[2]{\left\langle#1\,\middle|\,#2\right\rangle}
\newcommand{\norm}[1]{\left\|#1\right\|}
\newcommand{\abs}[1]{\left|#1\right|}

\newcommand{\Ibox}[1]{\left[#1\right]}

\newcommand\qksat{$k$-QSAT\xspace}
\newcommand\twoqsat{$2$-QSAT\xspace}

\usepackage{cancel}
\usepackage{tikz}
\usetikzlibrary{arrows.meta}
\renewcommand\phi\varphi

\algnewcommand\Input{\Statex \textbf{Input:}\ }

\title{Search-Driven Clause Learning\\ for Product-State Quantum $k$-SAT (PRODSAT-QSAT)}
\author{Samuel González-Castillo\and Joon Hyung Lee \and Alfons Laarman }
\date{\today}

\begin{document}
\maketitle

\begin{abstract}
We study PRODSAT-QSAT($k$): given rank-one $k$-local projectors, determine whether a quantum $k$-SAT instance admits a satisfying product state. We present a CDCL-style refutation framework that searches a finite partition of each qubit’s Bloch sphere while a sound theory solver checks region feasibility using a geometric over-approximation of the projection amplitudes for each constraint. When the theory solver proves that no state in a region can satisfy a constraint, it produces a sound conflict clause that blocks that region; accumulated blocking clauses can yield a global result of product-state unsatisfiability (UN-PRODSAT). We formalise the problem, prove the soundness of the clause-learning rule, and describe a practical algorithm and implementation.
\end{abstract}

\tableofcontents

\section{Introduction}
Quantum $k$-SAT ($k$-QSAT) asks whether there exists an $n$-qubit state annihilated by a set of local projectors $\Pi_j$. Equivalently, given a Hamiltonian $H \coloneq \sum_j \Pi_j$, the problem asks whether $H$ is frustration-free, i.e., if its ground energy is zero.
The subject is a quantum analog of the classical problem of $k$-SAT, but much less explored.  There have been successful discoveries.  It was first studied by Kitaev~\cite{kitaev}.  \twoqsat has been proven to be easy to solve in \cite{Bravyi,beaughar}.  The condition to have product state solutions and the algorithm to have such solutions have been proven in \cite{prodsat2024}, which still leaves many questions to explore.  For instance, the algorithm given in \cite{prodsat2024} is doubly exponential.  It also opens questions about the phase beyond product state phase, namely entangled state and UNSAT. In analogy to how $k$-SAT is NP-complete for $k \geq 3$, $k$-QSAT is QMA-complete for $k\geq 3$, as shown in~\cite{gosset2016quantum}.

Even for small sizes, heuristic solvers must navigate a nonconvex, high-dimensional landscape.
From an algorithmic perspective, it is natural to seek satisfiable states within structured ans\"atze.
In this work we focus on the \emph{product-state} ansatz, which is both physically meaningful and algorithmically tractable.  Specifically, we aim to certify the absence of product state solutions (UN-PRODSAT) of a finite size \qksat leveraging the principle of conflict-driven clause learning (CDCL)~\cite{marques2002grasp,marques2009conflict}.

The product state is defined as follows:
\begin{equation}\label{def:prodsat}
\ket{\psi} = \bigotimes_{j=1}^n \ket{\psi_j}, \qquad \ket{\psi_j}\in\C^2,\ \norm{\ket{\psi_j}}=1.
\end{equation}
We develop a clause-learning framework for this setting, that helps guide a dichotomic search over the state space.
The key idea is to discretize the single-qubit manifold into finitely many regions
and let a CDCL SAT solver~\cite{marques2002grasp,marques2009conflict,cdcl} search over regions while a continuous theory solver checks feasibility
inside a region using interval arithmetic. When feasibility fails, a \emph{sound conflict clause} can be produced.

\paragraph{Problem definition.}

Fix $n$ qubits. For each constraint $j\in[m]$, let $S_j\subseteq [n]$ be a subset with $\abs{S_j}=k$ and let $\Pi_j$ be a rank-one projector supported on $S_j$:
\[
\Pi_j \;=\; \ket{v_j}\bra{v_j}, \qquad \ket{v_j}\in (\C^2)^{\otimes k},\ \norm{\ket{v_j}}=1.
\]
For a global state $\ket{\psi}\in(\C^2)^{\otimes n}$, let $\ket{\psi_{S_j}}$ denote the projected state on $S_j$. We interpret $\Pi_j$ as excluding the local state $\ket{v_j}$: a global state $\ket{\psi}$
satisfies constraint $j$ if and only if
\[
\Pi_j \ket{\psi} = 0 \quad \Longleftrightarrow\quad \braket{v_j}{\psi_{S_j}}=0.
\]

\begin{definition}[PRODSAT-QSAT($k$)]
Given rank-one $k$-local projectors $\Pi_1,\dots,\Pi_m$, decide whether there exists a product state
$\ket{\psi}=\bigotimes_{j=1}^n\ket{\psi_j}$ such that $\Pi_j\ket{\psi}=0$ for all $j\in[m]$.
\label{def:pqsat}
\end{definition}

\paragraph{Contributions.}
\begin{itemize}[leftmargin=2em]
\item We formalize \emph{PRODSAT-QSAT($k$)} and an algorithm architecture tailored to certify product-state unsatisfiability.
\item We give a sound clause-learning rule based on interval exclusion of zero from projection amplitudes.
\item We propose practical rules for clause learning and provide an implementation of our algorithm.
\end{itemize}

In particular, we introduce a theory solver (Algorithm~\ref{alg:constraint}) that can detect product-state unsatisfiability of individual QSAT($k$) constraints in a family of regions in the product state space. Building on this, we use our clause-learning framework to define an algorithm (Algorithm~\ref{alg:full}) that is backed by the following result:

\begin{theorem}
Fix $k$ and consider a QSAT($k$) instance. Algorithm~\ref{alg:full} runs, in the worst-case scenario, in exponential time in the number of qubits. It returns UN-PRODSAT if it can certify that its input instance is UN-PRODSAT. Otherwise, it returns MAYBE($A$, $\rho$), where $A$ and $\rho$ are the sums of, respectively, the area and the maximum modulus squared of the set over-approximations constructed by the theory solver.
\label{thm:main}
\end{theorem}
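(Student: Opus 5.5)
The proof will establish three properties of Algorithm~\ref{alg:full}: termination within exponential time, soundness of the UN-PRODSAT verdict, and correctness of the quantities reported in the MAYBE output. I will treat the algorithm as a CDCL loop over Boolean variables that encode, for each qubit $j\in[n]$, which cell of a fixed finite partition $\{R_{j,1},\dots,R_{j,r}\}$ of the Bloch sphere holds $\ket{\psi_j}$. The encoding includes the usual at-least-one and at-most-one clauses for each qubit. Its only other clauses are the blocking clauses learned from the theory solver (Algorithm~\ref{alg:constraint}).

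\textbf{Soundness.} I will prove the invariant that every clause in the database is satisfied by the region assignment of every satisfying product state.
\begin{itemize}[leftmargin=2em]
\item The encoding clauses satisfy the invariant because the partition covers the sphere.
\item For a learned clause, the theory solver rejects a tuple of regions $(R_{i,a_i})_{i\in S_j}$ only when the interval over-approximation of $\{\braket{v_j}{\psi_{S_j}} : \ket{\psi_i}\in R_{i,a_i}\}$ excludes $0$.
\item Interval arithmetic is inclusion-monotone: each computed enclosure contains the true range of the amplitude, which is multilinear in the local components.
\item So no product state with those region choices can satisfy constraint $j$. The clause $\bigvee_{i\in S_j}\neg x_{i,a_i}$ is therefore satisfied by every product-state solution.
\end{itemize}
Conflict analysis by resolution preserves this invariant, because resolvents are logical consequences of existing clauses. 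When the CDCL solver derives the empty clause, the clause set is unsatisfiable. By the invariant, no product-state solution exists, so UN-PRODSAT is correct.

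\textbf{Complexity.} There are $rn$ Boolean variables, so at most $r^n$ full region assignments. Each theory call costs time polynomial in $2^k$ and the bit-size of the input, which is constant in $n$ for fixed $k$. The CDCL search with clause learning never revisits a blocked assignment. Hence the number of decisions, conflicts and learned clauses is at most $2^{O(rn)}$, with $r$ fixed. I will argue this by the standard fact that every conflict adds a clause falsified by the current trail, so each assignment prefix is blocked permanently. This gives worst-case time exponential in $n$.

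\textbf{MAYBE output.} When the SAT solver finds a full assignment that the theory solver cannot refute for any constraint, the algorithm returns MAYBE. I will check directly from the definition of Algorithm~\ref{alg:full} that it accumulates the quantities $A$ and $\rho$ as stated. For each constraint, $A$ adds the area of the over-approximating set (rectangle or disk) that contains $0$, and $\rho$ adds the maximum of $|z|^2$ over that set. This part is bookkeeping. The point to verify is that these are genuinely the sums over the over-approximations built in the final, unrefuted assignment.

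The main obstacle is the interval-enclosure step in the soundness argument: showing rigorously that the geometric over-approximation of the amplitude over a Bloch-sphere cell really contains the true range. The cell must first be mapped to intervals on the parametrisation $(\cos\tfrac\theta2, e^{i\phi}\sin\tfrac\theta2)$. Then complex products of the $k$ factors must be enclosed, with the rectangular enclosures of products taken as supersets. I would prove this by induction on $k$, multiplying one qubit factor at a time and using the inclusion property of complex interval multiplication and addition.
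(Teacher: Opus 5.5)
Your overall architecture matches the paper's. The clause invariant is Lemma~\ref{lem:model} with $(Q,D)$-valid clauses, termination comes from every round adding a clause falsified by the current $\sigma$, and the MAYBE output is read off Algorithm~\ref{alg:full}. The gap is that your soundness invariant does not cover the clauses Algorithm~\ref{alg:full} actually adds. You justify only clauses that block a full tuple of cells, $\bigvee_{i\in S_j}\neg x_{i,a_i}$. Algorithm~\ref{alg:full} instead passes to the SAT solver the output of the clause generalisation rule (Algorithm~\ref{alg:generalise}), namely \eqref{eq:reduced-clause}. In general this keeps only the first $D_j$ bits of $\sigma(\hat\phi_{s_j})$ and the first $D'_j$ bits of $\sigma(\hat\theta_{s_j})$. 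Dropping literals is not licensed by the theory solver's refutation of $\sigma$'s own cell; you need two further facts.
(a) Algorithm~\ref{alg:generalise} only ever holds depth tuples on which the theory solver returned UN-PRODSAT. It starts at the full-depth tuple, which is refuted by hypothesis, and undoes every decrement that yields MAYBE.
(b) Dyadic intervals are nested, $I(x_1,\ldots,x_D)\subseteq I(x_1,\ldots,x_{D_j})$. So any total assignment falsifying \eqref{eq:reduced-clause} agrees with $\sigma$ on those prefixes, and its cell lies inside the certified coarser box. By Proposition~\ref{prop:theory-solver}(i), no product state there satisfies the constraint, which is why \eqref{eq:reduced-clause} is $(Q,D)$-valid.
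For termination and the complexity bound you also need that \eqref{eq:reduced-clause} is still falsified by $\sigma$, since each literal is $[v]^\sigma$. You also need that Algorithm~\ref{alg:generalise} halts after a number of theory-solver calls depending only on $k$ and $D$; this is the paper's constant $K$.

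The remaining discrepancies come from modelling a different algorithm. There is no one-hot encoding with at-least-one/at-most-one clauses: each qubit carries $2D-1$ dyadic bits, every total assignment already names a cell, and there are $2^{n(2D-1)}$ of them. Nor is the theory solver rectangular complex interval arithmetic. Algorithm~\ref{alg:constraint} expands $\braket{v}{\psi}=\sum_t\braket{v}{t}\prod_j\braket{t_j}{\psi_j}$. The range of each product is \emph{exactly} an annular sector (Lemma~\ref{lem:product-hull}). Each scaled sector is then enclosed in a convex polygon (Lemma~\ref{lem:poly-enclosure}(ii) with Definition~\ref{def:poly-enclosure}), and $0$ is tested in the Minkowski sum of these polygons. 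So the containment you flag as the main obstacle is Proposition~\ref{prop:theory-solver}(i). Its non-trivial geometric step is sector $\subseteq$ polygonal enclosure, not the product step. Also, $A$ and $\rho$ are the area and $\max_{p\in P}\abs{p}^2$ of that polygon $P$, not of a rectangle or disk. Finally, do not rest the complexity bound on CDCL internals: ``each prefix is blocked permanently'' fails for learnt clauses, which solvers delete and restart around. The paper treats the SAT solver as a black box. There are at most $2^{n(2D-1)}$ outer rounds, because each round's blocking clauses persist and exclude the current $\sigma$. Each SAT call costs at most $2^{n(2D-1)}$ in the worst case, giving $K\cdot 4^{n(2D-1)}$.
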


In the result above, the parameters $A$ and $\rho$ act as heuristic indicators of how likely the instance is to be PRODSAT when the algorithm cannot certify that it is UN-PRODSAT. Smaller values of those parameters would imply a higher likelihood.

\paragraph{Paper outline.}
\begin{itemize}
\item In Section~\ref{sec:geometry}, we develop the geometric tools used by the theory solver.
\item In Section~\ref{sec:tsolver}, we introduce the theory solver (Algorithm~\ref{alg:constraint}) and discuss and prove its soundness guarantees (summarised in Proposition~\ref{prop:theory-solver}).
\item In Section~\ref{sec:main}, we discuss the overall CDCL-based architecture and the clause-learning rule, and we prove Theorem~\ref{thm:main}.
\item In Section~\ref{sec:implement}, we discuss our implementation and share empirical observations.
\item Finally, we summarise our results in Section~\ref{sec:concl}.
\end{itemize}

\section{Geometric preliminaries}
\label{sec:geometry}

With a product state $\ket{\psi}$, write each single-qubit state (up to global phase) in Bloch coordinates:
\[\ket{\psi_j} = \ket{\Psi(\theta_j,\varphi_j)} \;=\; \cos(\theta_j/2)\ket{0} + e^{i\varphi_j}\sin(\theta_j/2)\ket{1},
\qquad \theta_j\in[0,\pi],\ \varphi_j\in[0,2\pi).\]
For a constraint $\Pi_i = \ket{v_i}\bra{v_i}$ supported on $S_i=\{i_1,\dots,i_k\}$,
$\Pi_i\ket{\psi}=0$ is equivalent to
\[\bra{v_i}\cdot \bigotimes_{j\in S_i} \ket{\Psi(\theta_j,\varphi_j)}.\]

\newcommand\conv{\operatorname{conv}}
\newcommand\poly[1]{\operatorname{poly}[#1]}
\newcommand\ann{\operatorname{Ann}}
\newcommand\polyenc[1]{\operatorname{polyenc}\left(#1\right)}

Given some complex numbers $p_1, \ldots, p_\ell$ (with $\ell\geq 2$), we write $\poly{p_1,\ldots,p_\ell}$ to denote the polygon in the complex plane with edges $[p_1, p_2], \ldots, [p_{\ell-1}, p_\ell], [p_\ell, p_1]$. Notice how we consider segments to be polygons (when $\ell = 2$). Moreover, given any real numbers $r_0,r_1,\phi_0,\phi_1$ with $0 \leq r_0 \leq r_1$ and $\phi_0 \leq \phi_1$, we will consider the annular sector
\[ \ann(r_0,r_1;\phi_0,\phi_1) \coloneq [r_0, r_1]e^{i[\phi_0, \phi_1]},\]
which is depicted in Figure~\ref{fig:annular}.

Given any two sets $A, B \subseteq \mathbb{C}$, we define their Minkowski sum and product to be, respectively,
\[ A \oplus B \coloneq \{ a + b \mid a\in A, b \in B\}, \qquad A \odot B \coloneq \{ a \cdot b \mid a \in A, b \in B \}.\]
Moreover, for any set $A\subseteq \mathbb{C}$, we write $\conv A$ to denote its \emph{convex hull}: the smallest convex subset of $\mathbb{C}$ containing $A$. Of course, in particular, for any points $p_1, \ldots, p_n \in \mathbb{C}$, we have~\cite{convex-sets}
\[\conv\{p_1, \ldots, p_\ell\} \coloneq \left\{ \sum_{r = 1}^\ell \lambda_r p_r \, \middle\vert\, \lambda_r \in \mathbb{R}_{\geq 0}, \sum_{r = 1}^\ell \lambda_r = 1\right\},\]
and, if $\poly{p_1, \ldots, p_\ell}$ is a convex polygon, then $\poly{p_1,\ldots,p_\ell} = \conv\{p_1, \ldots, p_\ell\}$.

If a state $\ket{\psi}$ of the form $\ket{\psi(\theta, \phi)}$ satisfies $\theta \in [\theta_{0}, \theta_{1}]$ and $\phi \in [\phi_{0}, \phi_{1}]$, then
\[ \braket{0}{\psi} \in \cos([\theta_{0}/2, \theta_{1}/2]), \qquad \braket{1}{\psi} \in \sin([\theta_{0}/2, \theta_{1}/2]) e^{i[\phi_{0}, \phi_{1}]},\]
hence the amplitude $\braket{0}{\psi}$ lies in the interval $[s_{0}, s_{1}] \coloneq  \cos([\theta_{0}/2, \theta_{1}/2])$, which can be identified as the annular sector $\ann(s_0,s_1;0,0)$, whereas the amplitude $\braket{1}{\psi}$ lies in the annular sector $\ann(r_0,r_1;\phi_0,\phi_1)$ with $[r_0, r_1] \coloneq \sin([\theta_0/2, \theta_1/2])$.
Therefore, both amplitudes take values in annular sectors.

Computing the Minkowski product of annular sectors is very straightforward, as the following result shows.

\begin{lemma}
Let $\ann(r_0,r_1,\phi_0,\phi_1)$ and $\ann(s_0,s_1,\theta_0,\theta_1)$ be annular sectors. Their Minkowski product can be computed as
\[ \ann(r_0,r_1,\phi_0,\phi_1) \odot \ann(s_0,s_1,\theta_0,\theta_1) = \ann(r_0 s_0, r_1 s_1, \phi_0 + \theta_0, \phi_1 + \theta_1).\]
\label{lem:product-hull}
\end{lemma}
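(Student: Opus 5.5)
We prove the identity by double inclusion, using polar coordinates throughout. By definition, $\ann(r_0,r_1;\phi_0,\phi_1)=\{\rho e^{i\alpha} \mid \rho\in[r_0,r_1],\ \alpha\in[\phi_0,\phi_1]\}$. The key fact is that complex multiplication multiplies moduli and adds arguments, so $(\rho e^{i\alpha})(\sigma e^{i\beta})=\rho\sigma\, e^{i(\alpha+\beta)}$. This splits the Minkowski product into a product of real radial intervals and a Minkowski sum of angular intervals.

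For the inclusion ``$\subseteq$'', take $a=\rho e^{i\alpha}$ in the first sector and $b=\sigma e^{i\beta}$ in the second. Since all radii are nonnegative, monotonicity of multiplication on $\mathbb{R}_{\geq 0}$ gives $r_0s_0\leq\rho\sigma\leq r_1s_1$. Adding the angle bounds gives $\phi_0+\theta_0\leq\alpha+\beta\leq\phi_1+\theta_1$. Hence $ab$ lies in $\ann(r_0s_0,r_1s_1;\phi_0+\theta_0,\phi_1+\theta_1)$.

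For the inclusion ``$\supseteq$'', take $z=\tau e^{i\gamma}$ with $\tau\in[r_0s_0,r_1s_1]$ and $\gamma\in[\phi_0+\theta_0,\phi_1+\theta_1]$. We must exhibit factors $\rho\in[r_0,r_1]$, $\sigma\in[s_0,s_1]$ with $\rho\sigma=\tau$, and $\alpha,\beta$ in the respective angular intervals with $\alpha+\beta=\gamma$.
\begin{itemize}
\item \emph{Angles.} The map $(\alpha,\beta)\mapsto\alpha+\beta$ is continuous on the connected rectangle $[\phi_0,\phi_1]\times[\theta_0,\theta_1]$. It attains the values $\phi_0+\theta_0$ and $\phi_1+\theta_1$ at opposite corners, so by the intermediate value theorem it attains every $\gamma$ in between.
\item \emph{Moduli.} The same argument applies to $(\rho,\sigma)\mapsto\rho\sigma$ on $[r_0,r_1]\times[s_0,s_1]$, which is continuous and attains both $r_0s_0$ and $r_1s_1$.
\end{itemize}
Combining the two choices gives $z=ab$ with $a$ and $b$ in the respective sectors.

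The main obstacle is the degenerate and boundary cases, which need separate care. First, when $r_0=0$ or $s_0=0$, the polar representation of $0$ is not unique. This is harmless, because the angle can be chosen freely once the modulus is $0$. Second, when an angular interval has length $\geq 2\pi$, the sectors wrap around. This too is fine, since we work with the set-level parametrisation $[r_0,r_1]e^{i[\phi_0,\phi_1]}$ and never with canonical arguments, so both inclusions hold verbatim. I would note explicitly that the statement is an identity of parametrised sets and that no reduction of arguments modulo $2\pi$ is needed.
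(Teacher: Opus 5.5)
Your proof is correct and takes the same approach as the paper. The paper's one-sentence proof combines the polar identity $r e^{i\phi}\cdot s e^{i\theta} = rs\,e^{i(\phi+\theta)}$ with monotonicity of the operations (for one inclusion) and connectedness and compactness of closed intervals (for the other), and your double inclusion with the intermediate value theorem on the rectangles spells out exactly that sketch, including your correct observation that $r_0=0$ and wrap-around need no special treatment, since membership is defined through the parametrisation.
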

\begin{proof}
Taking into account that addition and multiplication are continuous and monotone, we only need to consider the fact that closed intervals are connected and compact, and that, for any real numbers $r,s,\theta,\phi$, we have $r\exp(i\phi) \cdot s \exp(i\theta) = rs\exp(i(\phi + \theta))$.
\end{proof}

\begin{figure}[htbp]
  \centering
  \begin{subfigure}[t]{0.45\textwidth}
  \vspace{0pt}

\begin{center}
\begin{tikzpicture}[scale=2]

\def\a{0.8}  
\def\b{1.6}

\def\p{20}
\def\t{80}

\draw[dashed] (0,0) ++(0:\a) arc[start angle=0, end angle=180, radius=\a];
\draw[dashed] (0,0) ++(0:\b) arc[start angle=0, end angle=180, radius=\b];
\draw (0,0) -- (\p:\b);
\draw (0,0) -- (\t:\b);

\path[fill=blue!20, draw=blue!60!black, thick]
	(\p:\a) -- (\p:\b) arc[start angle=\p, end angle=\t, radius=\b]
	-- (\t:\a) arc[start angle=\t, end angle=\p, radius=\a]
	-- cycle;

\node[right] at (\p:\b) {$r_1e^{i\phi_0}$};
\node[above right] at (\t:\b) {$r_1e^{i\phi_1}$};
\node[below right] at (\p:\a) {$r_0e^{i\phi_0}$};
\node[above left] at (\t:\a) {$r_0e^{i\phi_1}$};
\end{tikzpicture}
\end{center}
\caption{Annular sector $\ann(r_0,r_1;\phi_0,\phi_1)$.}
\label{fig:annular}
    
  \end{subfigure}
  \hfill
  \begin{subfigure}[t]{0.45\textwidth}
  \vspace{0pt}

\begin{center}
\begin{tikzpicture}[scale=2]

\def\a{0.8}  
\def\b{1.6}

\def\p{20}
\def\t{80}

\draw[dashed] (0,0) ++(0:\a) arc[start angle=0, end angle=180, radius=\a];
\draw[dashed] (0,0) ++(0:\b) arc[start angle=0, end angle=180, radius=\b];
\draw (0,0) -- (\p:\b);
\draw (0,0) -- (\t:\b);

\path[fill=blue!20, draw=blue!60!black, thick]
	(\p:\a) -- (\p:\b) arc[start angle=\p, end angle=\t, radius=\b]
	-- (\t:\a) arc[start angle=\t, end angle=\p, radius=\a]
	-- cycle;

\node[right] at (\p:{\b/cos((\t-\p)/2)}) {$A$};
\node[below right] at (\p:\a) {$D$};
\node[above left] at (\t:\a) {$C$};
\node[above right] at ((\t:{\b/cos((\t-\p)/2)}) {$B$};
\node[above left] at (0,0) {$O$};

\path[fill=red!20, fill opacity=0.3, draw=red!60!black, thick]
	(\p:\a) -- (\t:\a) -- (\t:{\b/cos((\t-\p)/2)}) 
	-- (\p:{\b/cos((\t-\p)/2)}) -- (\p:\a);

\end{tikzpicture}
\end{center}
\caption{Representation of the polygonal enclosure (in red) of an annular sector (in blue), following the same notation as in Lemma~\ref{lem:poly-enclosure}.}
\label{fig:poly-enclosure}
    
  \end{subfigure}
  \caption{Annular sectors, and their polygonal enclosures.}
  \label{fig:parallel}
\end{figure}
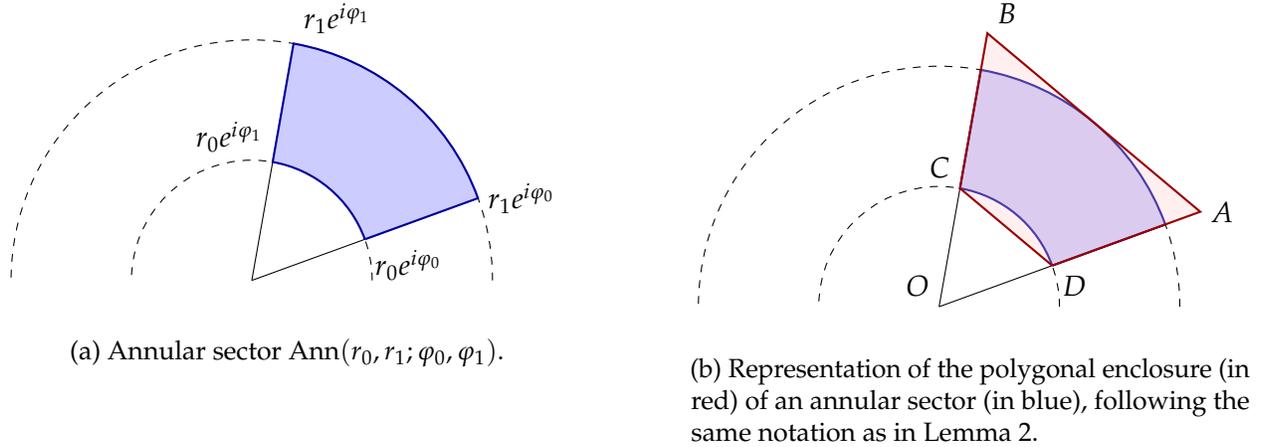

The lemma above provides a simple way of computing the Minkowski product of annular sectors. However, computing their Minkowski sum directly would not be straightforward. In order to be able to operate with these annular sectors, we will enclose them in convex polygons, motivated by the following result.

\begin{figure}
\end{figure}

\begin{lemma}
    Given some real numbers $0 \leq r_0 \leq r_1$ and $0 \leq \phi_0 \leq \phi_1 \leq 2\pi$ with $\phi_1 - \phi_0 < \pi$, consider the annular sector $\Gamma \coloneq \ann(r_0,r_1;\phi_0,\phi_1)$, and let $\Delta \coloneq \phi_1 - \phi_0$.
    
    Fixing the points
	\[ A \coloneq \frac{r_1}{\cos(\Delta/2)}e^{i\phi_0},\qquad
    B \coloneq \frac{r_1}{\cos(\Delta/2)}e^{i\phi_1},\qquad
	(C, D) \coloneq \left( r_0 e^{i\phi_1}, r_0 e^{i\phi_0}\right),\]

	we define the \emph{soft polygonal enclosure} of $\Gamma$ to be the polygon $P \coloneq \poly{A, B, C, D}$, which is represented in Figure~\ref{fig:poly-enclosure}. It satisfies the following properties:
	\begin{enumerate}[label=(\roman*)]
		\item The soft polygonal enclosure $P$ is a convex polygon.
		\item The annular sector $\Gamma$ is included in its soft polygonal enclosure $P$.
		\item The soft polygonal enclosure $P$ over-approximates the annular sector $\Gamma$ arbitrarily well as $\phi_1 \to \phi_0$. Formally, if $\mu(\cdot)$ denotes the surface area of a region in the complex plane,
        \[ \lim_{\phi_1 \to \phi_0} \frac{\mu(P)}{\mu(\Gamma)} = 1.\]

	\end{enumerate}

\label{lem:poly-enclosure}
\end{lemma}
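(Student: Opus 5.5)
By rotational symmetry (multiplying by $e^{-i(\phi_0+\Delta/2)}$ preserves convexity, inclusion and area), I will assume the sector is symmetric about the positive real axis, i.e.\ $\phi_0=-\Delta/2$ and $\phi_1=\Delta/2$. Let $R\coloneq r_1/\cos(\Delta/2)$. Then $A=Re^{-i\Delta/2}$, $B=Re^{i\Delta/2}$, $C=r_0e^{i\Delta/2}$ and $D=r_0e^{-i\Delta/2}$. In coordinates, $A$ and $B$ have real part $r_1$ and imaginary parts $\mp r_1\tan(\Delta/2)$. So $[A,B]$ is the vertical segment $\{\operatorname{Re} z=r_1\}$, which is tangent to the outer circle at $r_1$. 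The segment $[C,D]$ is the vertical chord at real part $r_0\cos(\Delta/2)\le r_1$. The segments $[B,C]$ and $[D,A]$ lie on the rays of angle $\pm\Delta/2$.

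\textbf{(i) Convexity.} $P$ is a trapezoid, since $[A,B]$ and $[C,D]$ are parallel vertical segments with the ordering above. Equivalently, $P$ is the intersection of four half-planes: $\operatorname{Re} z\le r_1$, $\operatorname{Re} z\ge r_0\cos(\Delta/2)$, and the two half-planes bounded by the lines through $0$ at angles $\pm\Delta/2$ that contain the real axis. Since $\Delta<\pi$, these last two form the convex cone $\{\arg z\in[-\Delta/2,\Delta/2]\}$. An intersection of half-planes is convex, and I will check that its vertices are exactly $A,B,C,D$. The degenerate cases $r_0=0$ (then $C=D=0$, a triangle) and $\Delta=0$ (a segment) fit the same description.

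\textbf{(ii) Inclusion.} Take $z=\rho e^{i\alpha}$ with $r_0\le\rho\le r_1$ and $|\alpha|\le\Delta/2$. Then $z$ lies in the cone. Moreover $\operatorname{Re} z=\rho\cos\alpha\le r_1$, and $\operatorname{Re} z=\rho\cos\alpha\ge r_0\cos(\Delta/2)$ because $\cos$ is decreasing on $[0,\pi/2)$. So $z$ satisfies all four half-plane conditions and $\Gamma\subseteq P$.

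\textbf{(iii) Area ratio.} The sector has area $\mu(\Gamma)=\tfrac{\Delta}{2}(r_1^2-r_0^2)$. The trapezoid is the triangle $OAB$ minus the triangle $OCD$, so
\[\mu(P)=\tfrac12 R^2\sin\Delta-\tfrac12 r_0^2\sin\Delta=\tfrac12\sin\Delta\left(\frac{r_1^2}{\cos^2(\Delta/2)}-r_0^2\right).\]
As $\Delta\to0$ we have $\sin\Delta/\Delta\to1$ and $\cos(\Delta/2)\to1$, so the ratio tends to $1$.

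\textbf{Main obstacle.} This limit requires $r_0<r_1$. If $r_0=r_1$, then $\Gamma$ is an arc of zero area while $\mu(P)>0$, and the ratio is undefined. I expect to state the hypothesis $r_0<r_1$ explicitly for (iii), or else to interpret the claim as the ratio of one-dimensional lengths in that degenerate case.
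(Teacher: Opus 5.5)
Your proof is correct. Part (iii) is the paper's own computation (triangle $OAB$ minus triangle $OCD$, with $\tfrac12 R^2\sin\Delta=r_1^2\tan(\Delta/2)$). The only difference there is that you pass to the limit directly, whereas the paper goes through the expansion $\mu(P)=\tfrac{\Delta}{2}(r_1^2-r_0^2)+\mathcal{O}(\Delta^3)$.

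Parts (i) and (ii) take a genuinely different route. The paper proves convexity by checking that every interior angle of $P$ is at most $\pi$ and citing a characterisation of convex polygons. It proves inclusion by a boundary argument: the interior of $\Gamma$ meets $P$, and every edge of $P$ avoids the interior of $\Gamma$. Here $[A,B]$ is tangent to the outer circle, $[C,D]$ is a chord of the inner circle, and $[B,C]$, $[D,A]$ run along the bounding rays. You instead rotate to the symmetric position and describe $P$ as the strip $r_0\cos(\Delta/2)\le\operatorname{Re}z\le r_1$ intersected with the cone $\abs{\arg z}\le\Delta/2$. That one description gives convexity at once and reduces inclusion to the inequalities $r_0\cos(\Delta/2)\le\rho\cos\alpha\le r_1$.

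Your version is more self-contained and handles the degenerate cases ($r_0=0$, $\Delta=0$, $r_0=r_1$) uniformly. The paper's perimeter argument tacitly needs $\Gamma$ to have nonempty interior. In exchange, the paper's version reads directly off Figure~\ref{fig:poly-enclosure} and needs no change of coordinates.

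Your caveat on (iii) is justified. The paper's proof also uses $r_1>r_0$ implicitly when it concludes from the expansion above, so adding that hypothesis is the right fix. A length-based reinterpretation does not fit, because for $r_0=r_1$ and $\Delta>0$ the enclosure $P$ still has positive area.
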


\begin{proof} Let $O$ denote the origin ($O \coloneq 0 \in \mathbb{C}$).
\begin{enumerate}[label=(\roman*)]
\item It suffices to verify that, since $\phi_1 - \phi_0 \leq \pi$, the angles at all the vertices are smaller than or equal to $\pi$, and the result follows from Theorem~18.7 in~\cite{convex-sets}.

\item The intersection of the interior of $\Gamma$ with the area enclosed by $P$ can be easily proven to be nonempty, so it suffices to prove that the perimeter of $P$ lies in the exterior of $\Gamma$ or its boundary.

Define the points $A' = r_1 e^{i\phi_0}$ and $B' = r_1 e^{i\phi_1}$, in such a way that $AD = AA' \cup A'D$ and $BC = BB' \cup B'C$. Clearly, the segments $A'D$ and $B'C$ are in the boundary, whereas $AA'$ and $BB'$ lie in the exterior (excluding $A'$ and $B'$ themselves), as their moduli is greater than $r_1$.
Moreover, $CD$ (except for $C$ and $D$) lies in the exterior, as it is a chord joining two points in the circumference with radius $r_0$.

Lastly, regarding $AB$, it suffices to note that it is tangent to the circle of radius $r_1$, and hence lies in its exterior, with the exception of its midpoint, which, of course, is in the boundary.

\item On the one hand, the area of the annular sector is
\[ \mu(\Gamma) = \frac{\Delta}{2} (r_1^2 - r_0^2). \]
On the other hand, the area of the soft polygonal enclosure $P$ will be the area of the isosceles triangle $OAB$, minus the area of the isosceles triangle $ODC$. Moreover,
\[ \mu(OAB) = r_1^2 \tan(\Delta/2), \qquad \mu(ODC) = \frac{1}{2}r_0^2 \sin(\Delta),\]
therefore
\begin{align*}
\mu(P) = \mu(OAB) - \mu(ODC) &= r_1^2 \tan(\Delta/2)-\frac{r_0^2}{2}\sin(\Delta) \\
&= \frac{\Delta}{2}(r_1^2 - r_0^2) + \mathcal{O}(\Delta)^3,
\end{align*}
and the result follows.
\end{enumerate}
\end{proof}

The preceding result shows that the soft polygonal enclosure of an annular sector is a convex set that includes the entire sector and approximates it well as the sector becomes smaller. Moreover, since it is a polygon, it is much easier to operate with.
Nevertheless, we can only compute it for annular sectors in which the angle difference $\phi_1 - \phi_0$ is smaller than $\pi$, and the quality of the approximation improves as this angle difference decreases. Taking this into account, we can define a general polygonal enclosure as follows.

\begin{definition}
Let $S(r_0,r_1,\phi_0,\phi_1)$ denote the soft polygonal enclosure of an annular sector $\ann(r_0,r_1,\phi_0,\phi_1)$.

Consider a \emph{precision factor} $F \in \mathbb{N}$. Given any annular sector of the form $\Gamma \coloneq \ann(r_0,r_1,\phi_0,\phi_1)$ (with no bounds on the angles), let $\Delta = (\phi_1 - \phi_0)\bmod 2\pi$, and fix
\[N \coloneq \begin{cases}
F & \text{if }\Delta < \pi/8,\\
2F & \text{if }\Delta < \pi/4,\\
4F & \text{if }\Delta < \pi/2,\\
8F & \text{if }\Delta < \pi,\\
16F & \text{otherwise.}
\end{cases}\]
We define the polygonal enclosure of $\Gamma$ to be
\[ \polyenc{\Gamma} \coloneq \conv\left[{\bigcup_{r = 0}^{N-1} S\left(r_0, r_1, \phi_0 + (\phi_1 - \phi_0)\frac{r}{N}, \phi_0 + (\phi_1 - \phi_0) \frac{r + 1}{N}\right)}\right]. \]
\label{def:poly-enclosure}
\end{definition}

In the definition that we have just introduced, regardless of the choice of precision factor, we always have a well-defined polygonal enclosure, as the angles used in the soft polygonal enclosures are well below $\pi$. The precision factor can be adjusted at one's own discretion, in such a way that higher factor would lead to better approximations, at the cost of producing a more complex shape. We shall take $F \coloneq 4$.

Notice how, in the definition of polygonal enclosures, we have had to take the convex hull of the union of soft polygonal enclosures. Indeed, for annular sectors with non-zero measure, the union would never be convex, so the convex hull always yields a strict over-approximation.
However, for our purposes, it will provide a sufficiently accurate over-approximation (see Remark~\ref{note-convex-approximation}).
In order to efficiently compute convex hulls, one may use, for instance, the Monotone Chain algorithm~\cite{andrewanother}, which constructs the convex hull of a set of points sorting them by their real component (and, in a tie, by their imaginary one) and then computing the lower and upper hulls with a linear pass through the points. This is the algorithm that we will use in our implementation. Nevertheless, it is worth pointing out that another alternative is to use the Kirkpatrick--Seidel algorithm~\cite{purge-convex}, which is more intricate and thus not as well-suited for small input sizes like the ones we consider, but has better asymptotic complexity. In particular, if $n$ denotes the number of input points and $h$ represents the number of points in the convex hull, the Monotone Chain algorithm has an asymptotic complexity of $\mathcal{O}(n \log n)$, whereas Kirkpatrick--Seidel yields $\mathcal{O}(n \log h)$.

To conclude this section, we will discuss how to determine whether $0$ is contained in the Minkowski sum of some convex polygons of the complex plane. Given some convex polygons, it is always possible~\cite{efficient-sum} to efficiently compute the vertices of the convex polygon that corresponds to their Minkowski sum. Then, determining whether $0$ belongs to this sum is straightforward, as the following result shows.

\begin{lemma}
Consider a convex polygon $P \coloneq \poly{p_1,\ldots,p_\ell} = \conv\{p_1,\ldots,p_\ell\}$ in the complex plane. The point $0$ is contained in the interior or boundary of $P$ if and only if all the cross products $p_j \times (p_{j + 1} - p_j)$ have the same sign for $j = 1,\ldots, \ell$, setting $p_{\ell+1} \coloneq p_1$.
\end{lemma}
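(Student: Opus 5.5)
The plan is to reduce the statement to the half-plane description of a convex polygon. First I would observe that, writing $z\times w \coloneq \operatorname{Im}(\bar z w)$ for the planar cross product, we have
\[ p_j\times(p_{j+1}-p_j) = p_j\times p_{j+1} = (p_{j+1}-p_j)\times(0-p_j), \]
where the first equality uses $p_j\times p_j=0$ and the second uses bilinearity and antisymmetry. So the $j$-th quantity is exactly the orientation test of the point $0$ against the directed edge $[p_j,p_{j+1}]$. It is positive if $0$ lies strictly to the left of the line through $p_j,p_{j+1}$, negative if strictly to the right, and zero if $0$ is on that line. I would read ``the same sign'' in the weak sense: all values are $\geq 0$, or all are $\leq 0$, so that zeros are allowed. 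This is needed to capture the boundary case.

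Next I would fix the standing assumptions under which the lemma is meant. The $p_j$ are the vertices of $P$ listed in cyclic order, $\ell\geq 3$, and $P$ has nonempty interior. I will note explicitly that the degenerate case $\ell=2$ (a segment) has to be excluded or handled separately. There the two cross products are $p_1\times p_2$ and $-p_1\times p_2$, so the criterion only says that $0$ is on the line through $p_1,p_2$, not on the segment. Under these assumptions, I would invoke the standard fact (e.g.\ from~\cite{convex-sets}) that a convex polygon with nonempty interior is the intersection of the closed half-planes bounded by its edge lines. When the vertices are traversed counterclockwise, each of these half-planes lies to the left of its directed edge, so
\[ P=\{z : (p_{j+1}-p_j)\times(z-p_j)\geq 0 \text{ for all } j\}. \]
When they are traversed clockwise, the same holds with $\leq 0$.

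Both directions then follow. If $0\in P$, then by the half-plane description all the orientation values at $z=0$ are $\geq 0$, or all are $\leq 0$, depending on the orientation of the traversal. Conversely, suppose all values are $\geq 0$ (the case $\leq 0$ is symmetric). If the traversal is counterclockwise, then $0$ lies in every left half-plane and hence in $P$. The main obstacle is the remaining case, where the signs found do not match the traversal orientation: say all values are $\geq 0$ but the vertices are listed clockwise. Then $0$ would lie in every \emph{right}-hand closed half-plane, i.e.\ in the intersection of the complements of the open interior half-planes. I would rule this out by summing the signed areas:
\[ \sum_j p_j\times p_{j+1} = 2\cdot(\text{signed area of } P), \]
which is strictly negative for a clockwise traversal of a polygon with nonempty interior. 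That contradicts all summands being $\geq 0$. Hence the weak signs must agree with the traversal orientation, and $0\in P$.
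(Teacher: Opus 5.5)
Your proposal is correct and follows the same route as the paper: both describe the convex polygon as the intersection of the closed half-planes determined by its edges, with one fixed orientation sign, and evaluate that description at the point $0$. Your extra ingredients---the signed-area (shoelace) argument showing that a common weak sign must match the traversal orientation, and the exclusion of segments ($\ell = 2$) and polygons with empty interior, where the criterion can hold with $0 \notin P$---supply exactly what the paper's ``the result is immediate'' leaves unjustified, and the $\ell = 2$ caveat is a genuine correction, since the paper explicitly counts segments as polygons.
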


\begin{proof}
A convex polygon coincides with the intersection of the half-planes defined by its edges~\cite[\S 1--2]{polytopes}, hence, for a fixed $\sigma \in \{+1, -1\}$,
\[ P = \bigcap_{j=1}^\ell \left\{p\in\mathbb{C} \mid \operatorname{sign} \left((p_j - p) \times ( p_{j+1} - p_j) \right) \in \{\sigma, 0\}\right\}.\]
The result is immediate from this fact.
\end{proof}

\section{Theory solver}
\label{sec:tsolver}
Without loss of generality, consider a rank-one projector $\Pi \coloneq \ket{v}\bra{v}$ supported on $[k]$. Given some intervals
\begin{equation}
[\theta_{10},\theta_{11}], \ldots, [\theta_{k0},\theta_{k1}] \subseteq [0, \pi],\qquad
[\phi_{10}, \phi_{11}], \ldots, [\phi_{k0}, \phi_{k1}] \subseteq [0,2\pi],
\tag{$*$}
\label{intervals-tsolver}
\end{equation}
we will consider some one-qubit states $\ket{\psi_1}, \ldots, \ket{\psi_k}$ such that, for every $j = 1,\ldots,k$
\[\ket{\psi_j} = \ket{\psi(\theta_j, \phi_j)} = \cos(\theta_j/2)\ket{0} + e^{i\phi_j} \sin(\theta_j/2)\ket{1},\qquad
\theta_j \in [\theta_{j0}, \theta_{j1}], \qquad
\phi_j \in [\phi_{j0}, \phi_{j1}].\]

Our goal is to determine whether, under these hypotheses, it is ever possible for the product state $\ket{\psi} \coloneq \ket{\psi_1}\otimes\cdots \otimes \ket{\psi_k}$ to be orthogonal to $\ket{v}$ (i.e., in the kernel of $\ket{v}\bra{v}$). To this end, we will introduce an algorithm that will return two possible outputs:
\begin{itemize}
\item UN-PRODSAT if it can certify that $\bra{v} (\ket{\psi_1}\otimes\cdots\ket{\psi_k}) \neq 0$ for all allowed configurations,
\item and MAYBE($A$, $\rho$) otherwise, specifying the area $A$ and maximum modulus squared $\rho$ of a set that covers all the possible values of $\braket{v}{\psi}$, which includes $0$. 
\end{itemize}

In order to determine whether $\bra{v}(\ket{\psi_1} \otimes \cdots \otimes \ket{\psi_k}) = 0$ for all allowed configurations, we shall first consider, for every $j = 1,\ldots, k$, the interval
\[ P_j^0 \coloneq \cos([\theta_{j0}/2, \theta_{j1}/2]),\]
and the annular sector
\[P_j^1 \coloneq {\sin([\theta_{j0}/2, \theta_{j1}/2])\cdot \exp(i[\phi_{j0}, \phi_{j1}])},\]
It is immediate to verify that $\braket{0}{\psi_j} \in P_j^0$ and $\braket{1}{\psi_j} \in P_j^1$.

Using the resolution of the identity, we know that
\begin{align*}
\bra{v}(\ket{\psi_1} \otimes \cdots \otimes \ket{\psi_k}) &=\bra{v} I_{2^k} (\ket{\psi_1} \otimes \cdots \otimes \ket{\psi_k})\\
&= \bra{v} \left(\sum_{t = 0}^{2^k - 1} \ket{t}\bra{t} \right)(\ket{\psi_1} \otimes \cdots \otimes \ket{\psi_k}) \\
&= \sum_t \braket{v}{t} \cdot \braket{t_1}{\psi_1}\cdots\braket{t_k}{\psi_k}\\
&\in \bigoplus_t \left( \braket{v}{t} \cdot P_1^{t_1} \odot \cdots \odot P_k^{t_k} \right)
\end{align*}
where we use $t_1,\ldots, t_k$ to denote the $k$ binary digits of $t$, and where $\oplus$ and $\odot$ denote, respectively, the Minkowski sum and product of sets. Therefore, a sufficient condition for having $\braket{v}{\psi} \neq 0$ for every possible configuration would be
\begin{equation}
0 \not\in \bigoplus_t\left(  \braket{v}{t} \cdot P_1^{t_1} \odot \cdots \odot P_k^{t_k}  \right) .
\tag{S}
\label{suf-cond-0}
\end{equation}
This condition can be verified using the geometrical tools that we introduced back in Section~\ref{sec:geometry}. The Minkowski products can be computed using Lemma~\ref{lem:product-hull}, and their subsequent multiplication by the scalars $\braket{v}{t}$ is direct. This turns the right hand side of the expression into a Minkowski sum of annular sectors. Using Lemma~\ref{lem:poly-enclosure} and Definition~\ref{def:poly-enclosure}, we can find the polygonal enclosures of these sectors, and, as we discussed towards the end of Section~\ref{sec:geometry}, we can efficiently determine whether $0$ belongs to their Minkowski sum.

\begin{remark}
As we mentioned in Section~\ref{sec:geometry}, when defining the polygonal enclosure of an annular sector, we take the convex hull of a set that already covers the sector --- and the convex hull is always a strict superset. However, the convex hull still provides a good over-approximation if it is going to be part of a Minkowski sum with a large number of summands, by virtue of the Shapley--Folkman lemma~\cite{shapley-folkman}. This result establishes that, in a Minkowski sum of sets $S \coloneq A_1 \oplus \cdots \oplus A_\ell \subseteq \mathbb{C}$, every element $s$ in the convex hull $\conv{S}$ can be written as a sum $s = a_1 + \ldots + a_\ell$ where every element $a_j$ belongs to $A_j$ except for at most two indices $j_1,j_2$ for which $a_{j_1} \in \conv(A_{j_1}) \setminus A_{j_1}$ (and analogously for $a_{j_2}$).
\label{note-convex-approximation}
\end{remark}

This defines the behaviour of the theory solver: if, using the tools that we have introduced, the sufficient condition \eqref{suf-cond-0} for $\braket{v}{\psi} \neq 0$ is verified, the theory solver will return UN-PRODSAT. Otherwise, it will return MAYBE, together with the area and maximum modulus of the Minkowski sum in \eqref{suf-cond-0}, which includes $0$. Clearly, with an UN-PRODSAT result, we can have full certainty that under none of the allowed configurations can there be a state satisfying the constraint defined by $\ket{v}\bra{v}$. On the other hand, a MAYBE result does not guarantee the existence of a state satisfying the constraint among the allowed configurations --- although, in such a case, the certainty about the existence of a solution grows as the returned area becomes smaller.

\newcommand\css[1]{\operatorname{CSS}\left\langle#1\right\rangle}

The following result summarises and encapsulates the procedure that we have defined in this section.

\begin{proposition}
\label{prop:theory-solver}
Consider a constraint $\Pi$ with support $s_1,\ldots,s_k$ defined by a $k$-qubit state $\ket{v}$, and some intervals $[\theta_{j0}, \theta_{j1}] \subseteq [0,\pi]$ and $[\phi_{j0}, \phi_{j1}] \subseteq [0,2\pi]$ for $j = 1,\ldots k$, and define
\[ \Omega \coloneq \{\ket{\Psi(\theta_1,\phi_1)}\otimes\cdots\otimes\ket{\Psi(\theta_n, \phi_n)} \mid \theta_{s_j} \in [\theta_{j0}, \theta_{j1}], \phi_{s_j} \in [\phi_{j0}, \phi_{j1}], j = 1,\ldots,k\}\]
to be the set of $n$-qubit product states where the Bloch sphere angles of qubits $s_j$ are bounded by $[\theta_{0j}, \theta_{1j}]$ and $[\phi_{j0}, \phi_{j1}]$ for $j = 1,\ldots,k$.

Upon sending $v, (s_1,\ldots,s_k), ([\theta_{j_0}, \theta_{j_1}], [\phi_{j_0}, \phi_{j_1}])$ to Algorithm~\ref{alg:constraint}:
\begin{enumerate}[label=(\roman*)]
\item If the result is UN-PRODSAT, then no state in $\Omega$ satisfies the constraint $\Pi$.
\item If the result is MAYBE($A$, $\rho$), then the set of values $\Pi\ket{\psi}$ taken when $\ket{\psi} \in \Omega$ may contain zero and is covered by a set of area $A$ and maximum modulus squared $\rho$.
\end{enumerate}
\end{proposition}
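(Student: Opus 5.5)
The plan is to show that the set $W$ actually tested by Algorithm~\ref{alg:constraint} (the Minkowski sum of polygonal enclosures) contains every value $\braket{v}{\psi_{S}}$ with $\ket{\psi}\in\Omega$. Both parts of the proposition then follow. For (i), the algorithm answers UN-PRODSAT only when $0\notin W$. For (ii), the returned $A$ and $\rho$ are by construction the area and the maximum squared modulus of $W$, which is a convex polygon, so both quantities are finite and computable. (In (ii), $\Pi\ket{\psi}$ is read as the amplitude $\braket{v}{\psi_S}$, since $\Pi\ket{\psi}=\braket{v}{\psi_S}\,\ket{v}\otimes\ket{\psi_{\bar S}}$ vanishes exactly when this amplitude does.)

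The first step is the pointwise enclosure. Fix $\ket{\psi}\in\Omega$. Since $\Pi$ acts only on $S=\{s_1,\dots,s_k\}$, the qubits outside $S$ are irrelevant, and the computation preceding \eqref{suf-cond-0} gives
\[\braket{v}{\psi_S}=\sum_t \braket{v}{t}\prod_{j}\braket{t_j}{\psi_{s_j}}.\]
Each factor satisfies $\braket{t_j}{\psi_{s_j}}\in P_j^{t_j}$. Note that $P_j^0$ is the degenerate annular sector $\ann(s_0,s_1;0,0)$, with the endpoints ordered, as cosine is decreasing on $[0,\pi/2]$. Lemma~\ref{lem:product-hull}, applied inductively, shows that $\prod_j\braket{t_j}{\psi_{s_j}}$ lies in an annular sector $\Gamma_t$ that is computed exactly. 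Multiplying by the scalar $\braket{v}{t}=|c|e^{i\alpha}$ scales the radii by $|c|$ and shifts the angles by $\alpha$, so the result is again an annular sector $\Gamma_t'$. Hence $\braket{v}{\psi_S}\in\bigoplus_t\Gamma_t'$.

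The second step, which is the main technical point, is to show $\Gamma_t'\subseteq\polyenc{\Gamma_t'}$. Split the angular range into $N$ pieces. Each piece has width at most $\Delta/N$, and the choice of $N$ in Definition~\ref{def:poly-enclosure} makes this width below $\pi$ in every case. The delicate case is $\Delta\ge 2\pi$ after angle addition, where the pieces are width $(\phi_1-\phi_0)/16F$. There I would check that the raw difference $\phi_1-\phi_0$ is used rather than its residue. If not, I would replace the sector by the full annulus, which is a harmless over-approximation. By Lemma~\ref{lem:poly-enclosure}(ii), each sub-sector lies in its soft enclosure $S(\cdot)$. The sub-sectors cover $\Gamma_t'$, so $\Gamma_t'$ lies in the union of the soft enclosures, and therefore in its convex hull. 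Degenerate cases with $r_0=r_1$ or zero width are covered by the same lemma, since $\Gamma$ is then a segment or an arc.

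The third step is to use monotonicity of Minkowski sums under inclusion: $\bigoplus_t\Gamma_t'\subseteq\bigoplus_t\polyenc{\Gamma_t'}=W$. The sum of convex polygons is a convex polygon, computed exactly by~\cite{efficient-sum}. The zero-membership lemma at the end of Section~\ref{sec:geometry} decides $0\in W$ correctly, boundary points included. This gives (i). For (ii), the value set is contained in $W$, so $W$ is a covering set of area $A=\mu(W)$ and $\rho=\max_{w\in W}|w|^2$. The maximum is attained at a vertex by convexity of $|\cdot|^2$, which matches what the algorithm reports.
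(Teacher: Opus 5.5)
Your proposal is correct and follows essentially the same route as the paper, whose justification is the discussion in Section~\ref{sec:tsolver}. That route runs through the resolution-of-identity enclosure, exact products via Lemma~\ref{lem:product-hull}, containment in the enclosures via Lemma~\ref{lem:poly-enclosure}(ii) and Definition~\ref{def:poly-enclosure}, monotonicity of Minkowski sums, and the zero-membership test. Your worry about the choice of $N$ is justified and the paper glosses over it: since $N$ is chosen from $(\phi_1-\phi_0)\bmod 2\pi$, taking $k=2$ with both $\phi$-intervals equal to $[0,2\pi]$ gives a $t=(1,1)$ sector of angular length $4\pi$, hence $N=F=4$ and pieces of width exactly $\pi$, for which the vertices $A,B$ of Lemma~\ref{lem:poly-enclosure} are undefined ($\cos(\pi/2)=0$), so your fallback to the full annulus (split into pieces of width $\pi/2$) is a needed repair rather than a formality.
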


\begin{algorithm}
\caption{Theory solver}
\begin{algorithmic}[1]
    \Input Constraint defined by a $k$-qubit state $\ket{v}$ with support $(s_1,\ldots,s_k)$.
    \Input Sequence of $k$ ordered pairs of intervals $([\theta_{j0}, \theta_{j1}], [\phi_{j0}, \phi_{j1}])_{j=1}^k$.
    \Statex

    \State polygon\_summands $\gets$ []
    \For{$t = (\overline{t_1,\ldots,t_k})_2\gets 0,\ldots2^k-1$}
         \Comment{$t_{(\cdot)}$ are the binary digits of $t$.}
        \State annular\_sectors $\gets$ []
        \For{$j\gets 1,\ldots,k$}
            \If{$t_j = 1$}
                \State append(annular\_sectors, ${\sin([\theta_{j0}/2, \theta_{j1}/2])\cdot \exp(i[\phi_{j0}, \phi_{j1}])}$)
            \Else
                \State append(annular\_sectors, $\cos([\theta_{j0}/2, \theta_{j1}/2])$)
            \EndIf
        \EndFor
        \State append(polygon\_summands, $\polyenc{\braket{v}{t} \cdot \bigodot \text{annular\_sectors}}$)
    \EndFor
    \State $P \gets \bigoplus \text{polygon\_summands}$
    \If{$0 \in P$}
        \State \Return MAYBE(Area($P$), $\max_{p\in P} \abs{p}^2$)
    \Else
        \State \Return{UN-PRODSAT}
    \EndIf
\end{algorithmic}
\label{alg:constraint}
\end{algorithm}

\section{Main architecture (Proof of Theorem~\ref{thm:main})}
\label{sec:main}

In this section, we construct Algorithm~\ref{alg:full} and prove Theorem~\ref{thm:main}.

\newcommand\var[1]{\hat{#1}}
\newcommand\bisect[1]{\operatorname{BSS}\left\langle#1\right\rangle}

Let $x_1,\ldots,x_\ell$ be $\ell$ boolean values, with $\ell \geq 1$. Setting
\[B(x_1,\ldots,x_\ell) \coloneq \sum_{j = 1}^\ell x_j\cdot\frac{2\pi}{2^j},\]
we define the interval
\[ I(x_1, \ldots, x_\ell) \coloneq B(x_1,\ldots,x_\ell) + \Ibox{0, \frac{2\pi}{2^\ell}} = \Ibox{B, \frac{2\pi}{2^\ell}+B}.\]

This interval effectively represents the results of performing $\ell$ iterations of a dichotomic search through $[0,2\pi]$, selecting the left half in the $j$th iteration if $x_j = 0$, and choosing the right half if $x_j = 1$.
Clearly, the set of all intervals of this form constitutes, up to an overlap in a measure-zero subset, a partition of $[0, 2\pi]$. Moreover, the collection of all intervals of the form $I(0, x_2,\ldots, x_\ell)$ (for a fixed $\ell$) partitions in an analogous manner the interval $[0,\pi]$.
In Figure~\ref{fig:intervals}, we show a visual representation of how these intervals are constructed.

\begin{figure}
\newcommand\interval[6]{
\draw[dashed] (-0.1,-#5) -- (1.1,-#5);
\draw[very thick,{Circle}-{Circle}, #6] (#1, -#5) node[above left] {$#3$} -- (#2, -#5) node[above right] {$#4$};
}

\centering
\begin{tikzpicture}[xscale=11, yscale=1.5]

\fill[BrickRed!15] (0,0) rectangle (0.5, -1);
\fill[MidnightBlue!15] (0.25,-1) rectangle (0.5, -2);
\fill[OliveGreen!15] (0.25, -2) rectangle (0.375, -3);
\node[BrickRed] at (0.25, -0.5) {$x_1 = 0$};
\node[MidnightBlue] at (0.375, -1.5) {$x_2 = 1$};
\node[OliveGreen, rounded corners, fill = OliveGreen!15] at (0.3125, -2.5) {$x_3 = 0$};

\interval{0}{1}{0}{2\pi}{0}{black}
\interval{0}{0.5}{0}{\pi}{1}{BrickRed}
\interval{0.25}{0.5}{\frac{\pi}{2}}{\pi}{2}{MidnightBlue}
\interval{0.25}{0.375}{\frac{\pi}{2}}{\frac{3\pi}{2}}{3}{OliveGreen}
\end{tikzpicture}
\caption{The black, red, blue and green segments are, respectively, the intervals $I()$, $I(0)$, $I(0,1)$ and $I(0,1,0)$. Each additional bit that is passed to $I$ represents a new step in a binary search through the interval $[0, 2\pi]$.}
\label{fig:intervals}
\end{figure}

Consider a QSAT($k$) instance $Q$ as in Definition~\ref{def:pqsat}, and fix a \emph{search depth} $D \in \mathbb{N}$. We shall now formalise a procedure which will aim to determine whether there exist $2n$ intervals
\[\left\{\Theta_j \coloneq I(0,x_j^{1} \ldots, x_j^{D-1}), \quad \Phi_j \coloneq I(y_j^1, \ldots, y_j^D)\right\}_{j=1}^n,\]
with $x_{j}^{d}, y_j^d \in \mathbb{B}$, for which there may exist a solution $\ket{\psi}$ to the problem such that $\ket{\psi_j} = \ket{\Psi(\theta_j, \phi_j)}$ and $\theta_j \in \Theta_j, \phi_j \in \Phi_j$. The resulting algorithm will return UN-PRODSAT if it can certify that no such interval exists, and it will return MAYBE($A$, $\rho$) otherwise, as the theory solver. Moreover, under additional hypotheses on the problem instance, the algorithm may also return SAT if it can verify the existence of a satisfying assignment.

We model the problem as follows. We first consider the language of propositional logic over a set of $n \times (2D - 1)$ variables:
\[\mathcal{V} \coloneq \left\{\var{\theta}^{d'}_{j}, \var{\phi}^d_j \;\middle\vert\; j  \in [n],\; d \in[D],\; d' \in [D-1]\right\}.\]
For convenience, we will write
\[ \var{\theta}_j \coloneq (\var{\theta}_j^1,\ldots,\var{\theta}_j^{D-1}),\qquad
\var{\phi}_j \coloneq (\var{\phi}_j^1,\ldots,\var{\phi}_j^D).
\]

Following standard terminology, we define a \emph{literal} over our language to be a construction of the form $v$ or $\lnot v$ for $v \in \mathcal{V}$, and we take a \emph{clause} to be a disjunction of literals, and a \emph{(CNF) formula} to be a conjunction of clauses. Moreover, we define a \emph{total assignment} over our language to be a function $\sigma : \mathcal{V} \longrightarrow \mathbb{B}$, and we take a \emph{partial assignment} to be a function $\sigma' : D \subset \mathcal{V} \longrightarrow \mathbb{B}$. Given a partial or total assignment, the truth value (true, false or undefined) of clauses and formulas is defined in the obvious way.

We say that a total assignment $\sigma$ is \emph{$(Q,D)$-invalid} if, under the assumption that an $n$-qubit state is of the form
\[\ket{\psi} = \ket{\Psi(\theta_1, \phi_1)}\otimes\cdots\otimes\ket{\Psi(\theta_n,\phi_n)},\qquad \theta_j \in I(0,\sigma(\var{\theta}_j)),\quad \phi_j \in I(\sigma(\var{\phi}_j)),\]
there exists a constraint such that no product state in the induced region satisfies it; in this case, we also say that the constraint is not satisfied under the assignment $\sigma$. Otherwise, if all constraints are satisfied for a $\ket{\psi}$ under these conditions, we say that $\sigma$ is \emph{$(Q,D)$-valid}. Moreover, a clause is said to be \emph{$(Q,D)$-valid} if it is satisfied under any $(Q,D)$-valid total assignment.

Having introduced these basic definitions, the following result holds the key to our approach.
\begin{lemma}
Consider a QSAT instance $Q$, modelled as above.
\begin{enumerate}[label=(\roman*)]
\item If a (CNF) formula consisting of the conjunction of $(Q,D)$-valid clauses is unsatisfiable, then the QSAT instance $Q$ is UN-PRODSAT.\label{stm:unsatunprodsat}
\item If there exists a $(Q,D)$-valid total assignment, the hypothesis in \ref{stm:unsatunprodsat} can never hold.
\end{enumerate}
\label{lem:model}
\end{lemma}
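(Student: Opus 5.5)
Both parts reduce to the fact that the boxes $\Theta_j \times \Phi_j$ indexed by total assignments cover the whole product-state space. For (i) I argue by contraposition: assuming $Q$ is PRODSAT, I build a $(Q,D)$-valid total assignment. This assignment satisfies every $(Q,D)$-valid clause by definition, so it satisfies any conjunction of such clauses, and the formula cannot be unsatisfiable.

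\textbf{Step 1 (covering).} Let $\ket{\psi}=\bigotimes_j\ket{\psi_j}$ satisfy all constraints. Up to a global phase on each factor, which does not affect whether $\Pi_i\ket{\psi}=0$, each factor can be written as $\ket{\psi_j}=\ket{\Psi(\theta_j,\phi_j)}$ with $\theta_j\in[0,\pi]$ and $\phi_j\in[0,2\pi)$. By the partition property stated after the definition of $I(x_1,\ldots,x_\ell)$, the intervals $I(0,x^1,\ldots,x^{D-1})$ cover $[0,\pi]$ and the intervals $I(y^1,\ldots,y^D)$ cover $[0,2\pi]$. Hence there are bits $\sigma(\var\theta_j)\in\mathbb B^{D-1}$ and $\sigma(\var\phi_j)\in\mathbb B^{D}$ with $\theta_j\in I(0,\sigma(\var\theta_j))$ and $\phi_j\in I(\sigma(\var\phi_j))$. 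On the overlaps, which have measure zero, I pick either choice, for instance by taking the leftmost interval containing the point.

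To justify the covering carefully, I will induct on $\ell$: the two intervals $I(x_1,\ldots,x_{\ell},0)$ and $I(x_1,\ldots,x_\ell,1)$ are exactly the left and right closed halves of $I(x_1,\ldots,x_\ell)$, since $B$ increases by $x_{\ell+1}\cdot 2\pi/2^{\ell+1}$. The base case is that $I(0)$ is $[0,\pi]$ and $I(\,)$ is $[0,2\pi]$.

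\textbf{Step 2 (validity).} The assignment $\sigma$ defined this way is $(Q,D)$-valid, because $\ket{\psi}$ lies in its induced region and satisfies every constraint. So no constraint is "not satisfied under $\sigma$". Combined with the contrapositive argument above, this proves (i).

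\textbf{Part (ii).} Let $\sigma$ be a $(Q,D)$-valid total assignment and take any CNF formula of $(Q,D)$-valid clauses. By the definition of clause validity, each clause is satisfied by $\sigma$, so the conjunction is satisfied by $\sigma$ and is therefore satisfiable. This directly contradicts the hypothesis of (i).

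\textbf{Main obstacle.} The only delicate point is Step 1. One has to check that the dyadic intervals genuinely exhaust $[0,\pi]\times[0,2\pi)$, that the choice on boundary overlaps is harmless, and that the Bloch parametrisation loses no product states; the last point holds up to phase, which is irrelevant here. One must also read the definition of "$(Q,D)$-valid assignment" existentially, as I have done: some state in the region satisfies all constraints. Under that reading, Steps 1 and 2 and part (ii) are immediate.
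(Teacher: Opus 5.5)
Your proof is correct and follows the paper's argument. You argue by contraposition: a satisfying product state lies in some dyadic box (the covering fact the paper relegates to a footnote), so the corresponding total assignment is $(Q,D)$-valid and therefore satisfies every $(Q,D)$-valid clause, while part (ii) is immediate from the definitions. Your extra care over the covering induction, the per-factor global phases, and the need to read validity existentially rather than universally fills in details the paper leaves implicit.
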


\begin{proof}
    Regarding \ref{stm:unsatunprodsat}, if several $(Q,D)$-valid clauses cannot be satisfied simultaneously, that means that there cannot exist a $(Q,D)$-valid total assignment, hence $Q$ is UN-PRODSAT. Indeed, if $Q$ were PRODSAT, there would exist a satisfying $\ket{\Psi(\theta_1,\phi_1)} \otimes \cdots \ket{\Psi(\theta_n, \phi_n)}$, so a total assignment with\footnote{Such an assignment must exist, as the union of intervals of the form $I(x_1,\ldots,x_D)$ covers $[0,2\pi]$.} $\theta_j \in I(0,\sigma(\var{\theta}_j))$ and $\phi_j \in I(\sigma(\var{\phi}_j))$ would be $(Q,D)$-valid --- and therefore satisfy any $(Q,D)$-valid clause.

    The remaining statement is trivial.
\end{proof}

In addition to this, we will now discuss how every $(Q,D)$-invalid assignment induces a $(Q,D)$-valid clause that excludes that assignment.

For a total assignment $\sigma$ and a variable $v$, let $[v]^\sigma$ denote the literal $v$ if $\sigma(v) = 0$ and $\lnot v$ if $\sigma(v) = 1$. If a constraint with support $s_1, \ldots, s_k$ is not satisfied under $\sigma$, then, clearly
\[ \bigvee_{j = 1}^k \left(\bigvee_{d=1}^D [\var{\phi}_{s_j}^d]^\sigma\vee \bigvee_{d'=1}^{D-1} [\var{\theta}_{s_j}^d]^\sigma\right) \]
is $(Q,D)$-valid, as it is satisfied by any assignment that does not coincide with $\sigma$ over the variables that define the Bloch sphere angles on the qubits $s_1,\ldots,s_k$. Nevertheless, even if this clause is $(Q,D)$-valid, it is not particularly informative, as it includes $k(2D-1)$ variables. Luckily, we can potentially get rid of several of its literals.

If a certain constraint is not satisfied under a total assignment $\sigma$, that means that the constraint cannot be satisfied when the Bloch sphere angles of its qubits lie in
\[\theta_{s_j} \in I\left(0,\sigma(\var{\theta}_{s_j}^1), \ldots, \sigma(\var{\theta}_{s_j}^{D-1})\right), \qquad \phi_{s_j} \in I\left(\sigma(\var{\phi}_{s_j}^1), \ldots, \sigma(\var{\phi}_{s_j}^D)\right)\]
for $j\in[k]$. However, we can still run Algorithm~\ref{alg:constraint} repeatedly to find smaller values $D_1,\ldots,D_k$ and $D'_1,\ldots,D'_{k}$ such that Algorithm~\ref{alg:constraint} certifies UN-PRODSAT when the angles are restricted to
\[\theta_{s_j} \in I\left(0,\sigma(\var{\theta}_{s_j}^1), \ldots, \sigma(\var{\theta}_{s_j}^{D_j'})\right), \qquad \phi_{s_j} \in I\left(\sigma(\var{\phi}_{s_j}^1), \ldots, \sigma(\var{\phi}_{s_j}^{D_j})\right),\]
and hence the $(Q,D)$-valid blocking clause can be reduced to
\begin{equation}
\bigvee_{j = 1}^k \left(\bigvee_{d=1}^{D_j} [\var{\phi}_{s_j}^{d}]^\sigma \vee \bigvee_{d'=1}^{D'_j} [\var{\theta}_{s_j}^{d'}]^\sigma\right).
\tag{RC}
\label{eq:reduced-clause}
\end{equation}
In particular, we suggest using Algorithm~\ref{alg:generalise} for finding such values.
\begin{algorithm}[h]
\caption{Clause generalisation rule}
\begin{algorithmic}[1]
    \Input Constraint $\Pi$ supported on $s_1,\ldots,s_k$.
    \Input Total assignment $\sigma$, for which the theory solver returns UN-PRODSAT on $\Pi$.
    \Input Search depth $D$.
    \Statex

    \State We let $C(d_1,\ldots,d_k,d'_1,\ldots,d'_k)$ be the result of calling the theory solver on $\Pi$ with \[\theta_{s_j} \in I\left(0,\sigma(\var{\theta}_{s_j}^1), \ldots, \sigma(\var{\theta}_{s_j}^{d_j'})\right),\qquad\phi_{s_j} \in I\left(\sigma(\var{\phi}_{s_j}^1), \ldots, \sigma(\var{\phi}_{s_j}^{d_j})\right).\]
    \Statex

    \State Initialise some variables $(D_1,\ldots,D_k, D_1', \ldots,D_{k}') \gets (D, \ldots, D, D-1, \ldots, D-1)$.
    \State Initialise a vector of movable variables $M \gets (D_1,\ldots,D_{k}, D'_1, \ldots, D'_{k})$
    \State For any variable $d$, let $m(d) \gets d$ if $d\not\in M$ and $m(d) \gets d - 1$ otherwise.
    \Statex
    \While{$M\neq\emptyset$}
        \If{$C(m(D_1), \ldots, m(D_k), m(D_1'), \ldots, m(D_k'))$ is UN-PRODSAT}
            \State $(D_1,\ldots,D_k, D_1', \ldots, D_k') = (m(D_1), \ldots, m(D_k), m(D_1'), \ldots, m(D_k'))$
        \Else
            \For{$d$ in $M$}
            \Comment{We iterate over the variables in $M$, not their underlying values.}
                \State $d \gets d - 1$
                \If{$C(D_1, \ldots, D_k, D_1', \ldots, D_k')$ is MAYBE}
                    \State $d \gets d + 1$
                    \State Remove $d$ from $M$
                \EndIf
            \EndFor
        \EndIf
        \State Only keep in $M$ the variables $d$ with a strictly positive value.
    \EndWhile
    \State \Return $\bigvee_{j = 1}^k \left(\bigvee_{d=1}^{D_j} [\var{\phi}_{s_j}^{d}]^\sigma \vee \bigvee_{d'=1}^{D'_j} [\var{\theta}_{s_j}^{d'}]^\sigma\right)$
\end{algorithmic}
\label{alg:generalise}
\end{algorithm}

Taking all of this into account, there is a natural way of tackling our problem:
\begin{enumerate}
    \item Consider an initial total assignment $\sigma$ and initialise a database of $(Q,D)$-valid blocking clauses.
    \item \label{step:base} Check $\sigma$ on every constraint using Algorithm~\ref{alg:constraint}.
    \begin{enumerate}
        \item If all the constraints output MAYBE($A_r, \rho_r$), by Lemma~\ref{lem:model}, return MAYBE($\sum_r A_r, \sum_r \rho_r$).
        \item If some constraints are not satisfied by the current total assignment, add to the database of blocking clauses all the $(Q,D)$-valid clauses produced by the clause generalisation rule (Algorithm~\ref{alg:generalise}) for each of those constraints. All of these clauses will, in particular, exclude this assignment.
    \end{enumerate}
    \item \label{step:sat} Run a CDCL solver to find a new total assignment that will satisfy all the $(Q,D)$-valid blocking clauses.
    \begin{enumerate}
        \item If the CDCL SAT solver returns a (potentially partial) assignment, fill it into a complete assignment $\sigma$ and go back to Step~\ref{step:base}.
        \item If the CDCL SAT solver concludes that the set of clauses it was given is unsatisfiable, return UN-PRODSAT.
    \end{enumerate}
\end{enumerate}

In the algorithm above, we only return UN-PRODSAT when we can find an unsatisfiable set of $(Q,D)$-valid clauses, as in Lemma~\ref{lem:model}\ref{stm:unsatunprodsat}. Conversely, we return MAYBE whenever may have found a $(Q,D)$-valid total assignment, for in this case we cannot certify UN-PRODSAT. Moreover, we return two values, $A$ and $\rho$, which corresponds to the addition of the surface areas and maximum moduli squared of the sets produced by the theory solver. Clearly, the likelihood of having a SAT assignment grows as these values become smaller.

It is worth highlighting that our algorithm always comes to a halt, as, for each total assignment $\sigma$ found in Step~\ref{step:base}, either the algorithm returns or it incorporates a new clause that prevents that assignment from ever being chosen again in Step~\ref{step:sat}.

In terms of worst-case complexity, our algorithm runs, for a fixed $k$, in a time that is exponential on the number of qubits. Indeed, every call to the theory solver and the production of a blocking clause runs in time $K$ only dependent on $k$ and not on the number of qubits. Moreover, we only have $2^{n(2D-1)}$ possible assignments to test and, thus, that is the maximum number of iterations that the algorithm may go through. Together with the fact that the worst-case complexity of the SAT solver is also $2^{n(2D-1)}$, we are left with a total worst-case complexity of $K\cdot 4^{n(2D-1)}$.

This completes the proof of Theorem~\ref{thm:main}. Algorithm~\ref{alg:full} summarises the procedure that we have introduced.

\algnewcommand\algorithmiclabel{\textbf{label}}
\algnewcommand\Label[1]{\State \algorithmiclabel\ #1}

\algnewcommand\algorithmicgoto{\textbf{goto}}
\algnewcommand\Goto[1]{\State \algorithmicgoto\ #1}
\begin{algorithm}
\caption{PRODSAT-QSAT($k$)}
\begin{algorithmic}[1]
    \Input Instance $Q$, with a set of $k$-local constraints over $n$ qubits.
    \Input Search depth $D$.
    \Statex

    \State CDCL-SAT $\gets$ Initialise a CDCL-SAT solver instance over the variables in $\mathcal{V}$.
    \State $\sigma \gets$ Initialise a total assignment on all the binary variables in $\mathcal{V}$.
    \Statex
    \Label{START}
    \If{theory solver outputs MAYBE($A_j, \rho_j$) on every constraint under $\sigma$}
        \State\Return MAYBE($\sum_j A_j$, $\sum_j \rho_j$)
    \Else
        \For{every constraint $\Pi$ on which the theory solver outputs UN-PRODSAT}
            \State $C \gets$ application of the clause generalisation rule on $\Pi$ and $\sigma$ (with search depth $D$)
            \State Pass the clause $C$ to CDCL-SAT 
        \EndFor
    \EndIf
    \Statex
    
    \If{CDCL-SAT returns UNSAT}
        \State \Return UN-PRODSAT
    \Else
        \State $\sigma\gets$ total assignment satisfying all the blocking clauses
        \Goto{START}
    \EndIf
\end{algorithmic}
\label{alg:full}
\end{algorithm}

Putting everything together, we have defined a procedure that will return, for any QSAT($k$) instance, an UN-PRODSAT result (in which case the instance is, indeed, UN-PRODSAT) or an inconclusive MAYBE($A$, $\rho$) result --- where, the smaller the $A$ and $\rho$ values, the more likely it is that the instance will be PRODSAT. In this second scenario, one can resort to the Buchberger algorithm~\cite{prodsat2024}, which has a doubly-exponential worst-case computational cost, to get a definite result.

\subsection{Additional hypotheses for PRODSAT results}
With additional hypotheses, we may be able to certify PRODSAT with our own method.

\begin{proposition}
Consider a QSAT($k$) instance with some constraints $\Pi_j$ for $j \in [m]$, and define the Hamiltonian
\[ H \coloneq \sum_{j=1}^m \Pi_j.\]
Under the hypothesis that either the instance is PRODSAT (resp. SAT) or the smallest eigenvalue of $H$ is bigger than a certain $E_0$, we can certify PRODSAT (resp. SAT) if our procedure outputs MAYBE($-,\rho$) and $\rho < E_0$.
\end{proposition}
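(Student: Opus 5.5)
The argument is a dichotomy. When the procedure returns MAYBE$(-,\rho)$ we will exhibit a product state whose energy with respect to $H$ is at most $\rho$. If $\rho<E_0$, this state has energy strictly below $E_0$, so the smallest eigenvalue of $H$ is below $E_0$. The hypothesis then rules out the second alternative, and the instance must be PRODSAT (resp. SAT).

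First I recall what a MAYBE output means. By Algorithm~\ref{alg:full}, it is returned only at a total assignment $\sigma$ for which the theory solver gives MAYBE$(A_j,\rho_j)$ on every constraint, and $\rho=\sum_j\rho_j$. The assignment $\sigma$ fixes intervals $\theta_i\in I(0,\sigma(\var{\theta}_i))$ and $\phi_i\in I(\sigma(\var{\phi}_i))$ for every qubit $i$. Pick any point in these intervals, say the left endpoints, and let $\ket{\psi}=\bigotimes_i\ket{\Psi(\theta_i,\phi_i)}$. This is a normalised product state lying in the region $\Omega_j$ of Proposition~\ref{prop:theory-solver} for every constraint $j$ at once. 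The key point is that all constraints are evaluated on the same total assignment, so a single state lies in all of these regions simultaneously.

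Next I bound the energy of $\ket{\psi}$ term by term. For a rank-one projector, $\bra{\psi}\Pi_j\ket{\psi}=\abs{\braket{v_j}{\psi_{S_j}}}^2$, which uses the normalisation of the qubits outside $S_j$. By the resolution-of-identity computation in Section~\ref{sec:tsolver}, the amplitude $\braket{v_j}{\psi_{S_j}}$ lies in the Minkowski sum of the annular sectors. Lemma~\ref{lem:product-hull} and Lemma~\ref{lem:poly-enclosure}(ii) show that this Minkowski sum is contained in the polygon $P$ built by Algorithm~\ref{alg:constraint}; the convex hull taken in Definition~\ref{def:poly-enclosure} only enlarges each set. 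Since $\rho_j=\max_{p\in P}\abs{p}^2$, this gives $\bra{\psi}\Pi_j\ket{\psi}\le\rho_j$. Summing over $j$ yields
\[\bra{\psi}H\ket{\psi}\le\sum_{j=1}^m\rho_j=\rho<E_0.\]
By the variational principle, $\lambda_{\min}(H)\le\bra{\psi}H\ket{\psi}<E_0$.

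The hypothesis says the instance is PRODSAT (resp. SAT) or $\lambda_{\min}(H)>E_0$. The second option has been excluded, so the first holds, which is the claimed certification. The SAT variant needs nothing extra: the witness $\ket{\psi}$ is a product state and hence a valid trial state for the full spectrum of $H$.

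The main obstacle is making rigorous the containment of the true amplitude in $P$, and hence the bound by $\rho_j$. This requires that soundness in Proposition~\ref{prop:theory-solver}(ii) holds pointwise for every state in $\Omega_j$, not merely for the zero-test. I would check this through the chain of inclusions: the Minkowski sum of the sectors, then the union of soft enclosures, then its convex hull, then the Minkowski sum of those hulls. I would also note that Proposition~\ref{prop:theory-solver}(ii) should be read with the amplitude $\braket{v}{\psi}$ in place of $\Pi\ket{\psi}$.
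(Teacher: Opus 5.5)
Your proposal is correct and follows the paper's argument: a MAYBE output yields a region of product states on which $\bra{\psi}H\ket{\psi}\le\sum_j\rho_j=\rho$, so $\rho<E_0$ rules out $\lambda_{\min}(H)>E_0$, and the remaining alternative of the hypothesis must hold. You make explicit what the paper's two-line proof leaves implicit: a single state lies in every constraint's region for the common assignment $\sigma$, $\bra{\psi}\Pi_j\ket{\psi}=\abs{\braket{v_j}{\psi_{S_j}}}^2$, and the chain of inclusions puts the amplitude inside $P$, which you also rightly note requires reading the theory-solver guarantee in terms of $\braket{v}{\psi}$ rather than $\Pi\ket{\psi}$.
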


\begin{proof}
If the algorithm outputs MAYBE($-,\rho$), then we know that there exists a region of the product state space in which, for every state $\ket{\psi}$,
\[ \bra{\psi}H\ket{\psi}  = \sum_j \bra{\psi}{\Pi_j}\ket{\psi} \leq \sum_j \rho_j = \rho,\]
hence if $\rho < E_0$, then $E_0$ cannot be the smallest eigenvalue of $H$.
\end{proof}

\subsection{Regarding the need for a SAT solver}
In our algorithm, we use a SAT solver to look for a total assignment satisfying a set of blocking clauses. The following result shows why our choice to use a SAT solver is well-justified.

\begin{proposition}
Let $k \leq n$ be a pair of natural numbers.
Assume that there exists a procedure that, given any set of clauses of the form \eqref{eq:reduced-clause} for a Quantum $k$-SAT instance on $n$ qubits, can return a boolean assignment satisfying those clauses or conclude that they are unsatisfiable. This procedure can solve Boolean $k$-SAT on $n$ variables.
\end{proposition}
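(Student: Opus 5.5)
We prove the proposition by a direct reduction from Boolean $k$-SAT on $n$ variables to satisfiability of clause sets of the form \eqref{eq:reduced-clause}. The idea is to let each Boolean variable $z_j$ ($j\in[n]$) be encoded by a single propositional variable of the model from Section~\ref{sec:main}, namely the first azimuthal bit $\var{\phi}_j^1$. This bit selects whether $\phi_j\in I(0)=[0,\pi]$ or $\phi_j\in I(1)=[\pi,2\pi]$.

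Given a $k$-CNF formula $F=\bigwedge_r C_r$ over $z_1,\ldots,z_n$, each clause $C_r$ mentions at most $k$ distinct variables. We first pad each clause to exactly $k$ distinct variables by duplicating. If $C_r$ has fewer than $k$ variables, we adjoin unused indices $s$ and take the support to be $k$ distinct qubits; the extra qubits receive $D_s=D'_s=0$, so they contribute no literals. Here we use $k\le n$, which guarantees that $k$ distinct qubits exist.

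Next we check that every such clause is literally of the form \eqref{eq:reduced-clause}, with support $s_1,\ldots,s_k$ and with all $D'_j=0$. For a qubit $s_j$ appearing in $C_r$ we take $D_j=1$; otherwise $D_j=0$. The remaining point is that the literal $[\var{\phi}^1_{s_j}]^\sigma$ equals $\var{\phi}^1_{s_j}$ when $\sigma(\var{\phi}^1_{s_j})=0$ and $\lnot\var{\phi}^1_{s_j}$ otherwise. So, to produce a positive literal $z_i$, we choose $\sigma$ with $\sigma(\var{\phi}_i^1)=0$, and for $\lnot z_i$ we choose $\sigma(\var{\phi}_i^1)=1$. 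Since $\sigma$ in \eqref{eq:reduced-clause} is chosen per clause, any sign pattern is realisable, and each Boolean clause $C_r$ maps to a syntactic clause $\widetilde C_r$ of the required shape.

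We then run the assumed procedure on $\{\widetilde C_r\}$. An assignment satisfying all $\widetilde C_r$ restricts, via $z_j\mapsto \sigma(\var{\phi}^1_j)$, to an assignment satisfying $F$. Conversely, any satisfying assignment of $F$ extends arbitrarily on the other variables of $\mathcal{V}$ to satisfy all $\widetilde C_r$. So the unsatisfiability verdicts also coincide. The translation is linear in the size of $F$.

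The main obstacle is interpretive: whether "clauses of the form \eqref{eq:reduced-clause} for a Quantum $k$-SAT instance" requires that the clauses actually be $(Q,D)$-valid blocking clauses generated for some genuine instance $Q$, rather than merely sharing the syntactic shape. The plan is to argue the syntactic reading, since the procedure is only ever given the clauses. If the stronger reading is needed, we would construct, for each $C_r$, a rank-one $k$-local projector $\ket{v_r}\bra{v_r}$ for which Algorithm~\ref{alg:constraint} certifies UN-PRODSAT on the forbidden azimuthal half-regions. The first candidate is a product of single-qubit vectors on the support, chosen so that $\braket{v_r}{\psi}$ cannot vanish when every $\phi_{s_j}$ lies in the forbidden half. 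The difficulty is that such a choice must hold uniformly over all $\theta$, including the poles, where the azimuthal bit carries no information; we expect this step to need extra care, for instance by also fixing the first $\theta$ bit.
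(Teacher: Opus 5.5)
Your proposal is correct and follows essentially the same route as the paper. You encode each Boolean variable as the first azimuthal bit $\hat{\phi}^1_i$, take one constraint per clause supported on its variables with $D_j=1$ and $D'_j=0$, and use the freedom in the per-clause $\sigma$ to realise any sign pattern. The paper adopts the same syntactic reading of the hypothesis (it simply asserts that every clause of this shape ``can be'' a blocking clause), so your sketched stronger-reading construction is not needed.

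Your padding of short clauses with $D_s=D'_s=0$ is a slightly cleaner replacement for the paper's ``without loss of generality'' assumption of $k$ distinct variables per clause. You should also discard tautological clauses containing both $z_i$ and $\lnot z_i$ beforehand, since \eqref{eq:reduced-clause} contributes only one literal per variable.
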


\begin{proof}
To simplify our notation, given a boolean variable $X$, we shall represent its two literals as $(X,0) \coloneq \lnot X$ and $(X,1) \coloneq X$.

To prove our result, it suffices to show that any set of Boolean $k$-clauses on $n$ variables can be encoded as clauses of the form \eqref{eq:reduced-clause} when considering a Quantum $k$-SAT instance on $n$ qubits. Thus, let  $\{x_i\}_{i\in [n]}$ be Boolean variables, and fix an arbitrary set of $m$ $k$-clauses
\[ C\coloneq \left\{(x_{\iota(t, 1)}, \alpha(t,1)) \lor \cdots \lor (x_{\iota(t,k)}, \alpha(t, k))\right\}_{t=1}^m,\]
where $\iota : [m] \times [k] \longrightarrow [n]$ and $\alpha : [m]\times[k] \longrightarrow \{0,1\}$. Without loss of generality, we will assume that no variable has multiple occurrences in the same clause.

Let us now consider a Quantum $k$-SAT instance on $n$ qubits with $m$ constraints $\{\Pi_t\}_{t \in [m]}$, in such a way that, for each $t \in [m]$, the support of $\Pi_t$ is $\iota(t, 1), \ldots, \iota(t, k)$. Under such an instance, with any search depth $D$, any clause in $C$ can be of the form \eqref{eq:reduced-clause} up to a change of variables.
Indeed, for any constraint $\Pi_t$ supported on the qubits $i_1, \ldots, i_k$, any clause of the form
\[ \bigvee_{j=1}^k \left( \bigvee_{d = 1}^{D_j} (\var{\phi}_{i_j}^d, a_{jd}) \vee \bigvee_{d=1}^{D_j'} (\var{\theta}_{i_j}^d, b_{jd})\right)\]
can be a blocking clause for any $D_j, D_j' \in \mathbb{N}_0$ and any $a_{jd}, b_{jd} \in\{0,1\}$. This is true, in particular, for $D_j = 1$ and $D_j' = 0$, hence any clause of the form
\[ (\var{\phi}_{i_1}^1, a_1) \lor \cdots \lor (\var{\phi}_{i_k}^1, a_k)\]
can be a blocking clause. Particularising this for each of the constraints $\Pi_t$ and picking the values $a_r$ appropriately, it follows that 
\[ \left\{ \left(\var{\phi}_{\iota(t, 1)}^1, \alpha(t, 1)\right) \lor \cdots \lor \left(\var{\phi}_{\iota(t, k)}^1, \alpha(t, k)\right) \right\}_{t = 1}^m \]
is a set of clauses of the form \eqref{eq:reduced-clause} for this instance of Quantum $k$-SAT. Replacing the variables $x_i$ by $\var{\phi}_i^1$, the result follows.
\end{proof}

\section{Implementation}
\label{sec:implement}
We have implemented our algorithm in Rust, and our code is available on the following repository:
\begin{center}
    {\url{https://github.com/System-Verification-Lab/cdcl-qsat}}
\end{center}
It can run the algorithm numerically using floating-point numbers, or using an exact representation of elements in cyclotomic fields.
In fact, it can use any number representation provided by the user through the implementation of a trait.
Furthermore, it is possible to run the algorithm using any SAT solver. Through the RustSAT library~\cite{rustsat}, we provide built-in support for Glucose~\cite{glucose} (which is based on Minisat~\cite{minisat}), CaDiCaL~\cite{cadical}, and BatSat~\cite{batsat} (which is a Rust reimplementation of Minisat). Of course, any user can bring their own SAT solver by implementing the corresponding traits.

We have tested this implementation on randomly-generated states and verified that it is able to produce UN-PRODSAT results. In particular, using a search depth of $D = 8$, we tested it on $n$ qubits and $m$ constraints for $n =3,\ldots,9$ and $m = 1,\ldots,n+1$, generating $13$ random sets of constraints for every configuration of $n$ and $m$. The evolution of the runtime in terms of $n$ (for the case $n = m$), can be found in Figure~\ref{fig:time}.

\begin{figure}
\centering
\includegraphics[width=0.75\textwidth]{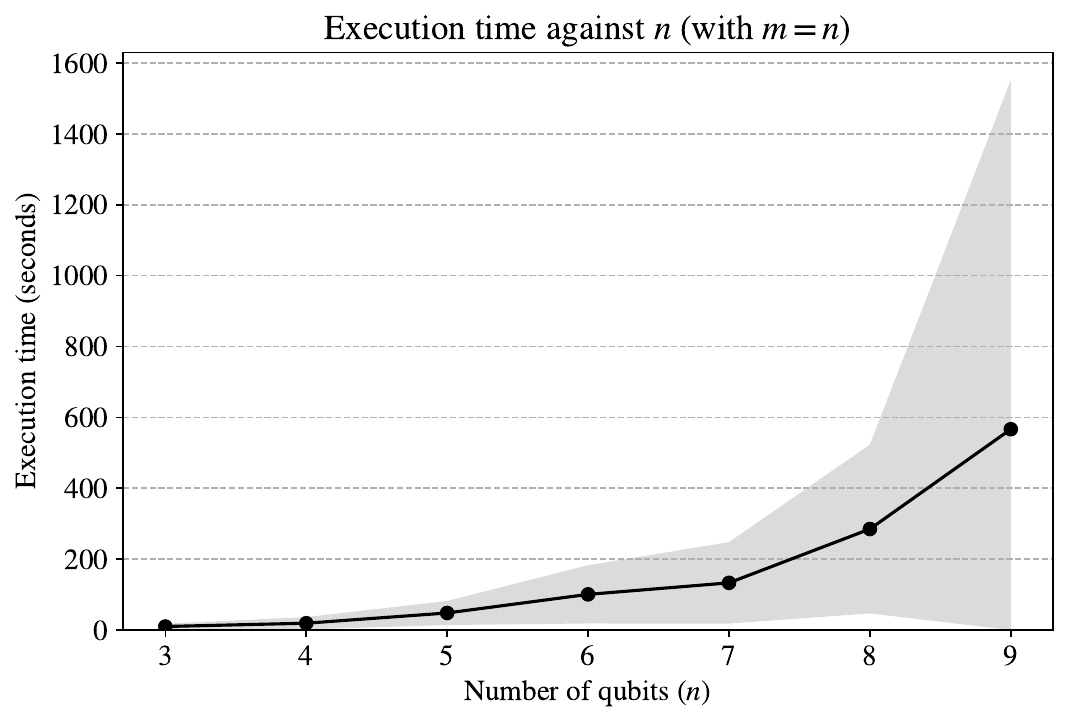}
\caption{Evolution of the runtime of our implementation (average and standard deviation), with respect to the number of qubits. To give a fair representation, we only consider instances in which the number of constraints $m$ is equal to the number of qubits $n$.}
\label{fig:time}
\end{figure}

\begin{table}
\centering
\begin{tblr}{
  colspec = {Q[c, gray!25]Q[c, gray!15] c c c c c c c c c c},
  hlines, vlines,
  cell{1}{1} = {r=2, c = 2}{m},
  cell{3}{1} = {r=7, c=1}{m},
  cell{1}{3} = {r=1, c=10}{m},
  row{1} = {bg = gray!25},
  row{2} = {bg = gray!15}
}
& & Number of constraints $m$ \\
&& 1 & 2 & 3 & 4 & 5 & 6 & 7 & 8 & 9 & 10\\
$n$ & 3 & 0/13 & 0/13 & 0/13 & 13/13 &  &  &  &  &  &  \\
& 4 & 0/13 & 0/13 & 0/13 & 1/13 & 13/13 &  &  &  &  &  \\
& 5 & 0/13 & 0/13 & 0/13 & 0/13 & 0/13 & 13/13 &  &  &  &  \\
& 6 & 0/13 & 0/13 & 0/13 & 0/13 & 0/13 & 1/13 & 12/13 &  &  &  \\
& 7 & 0/13 & 0/13 & 0/13 & 0/13 & 0/13 & 0/13 & 7/13 & 13/13 &  &  \\
& 8 & 0/13 & 0/13 & 0/13 & 0/13 & 0/13 & 0/13 & 1/13 & 4/13 & 13/13 &  \\
& 9 & 0/13 & 0/13 & 0/13 & 0/13 & 0/13 & 0/13 & 0/13 & 0/13 & 4/13 & 13/13 \\
\end{tblr}

\caption{Proportion of UN-PRODSAT certificates that the implementation of our algorithm was able to produce. These are consistent with theoretical expectations.}
\label{tab:unsatbench}
\end{table}
\begin{table}
\centering
\begin{tblr}{
  colspec = {Q[c, gray!25]cccccQ[c, gray!15]},
  hlines, vlines,
  row{1} = {bg = gray!25}
}
$n$ & {Avg.\\ \#clauses} & {Max.\\ \#clauses} & {Avg.\\ \#calls} & {Max. \\\#calls} & {Avg. \#learnt\\ clauses (CDCL)} & {Search\\space} \\
3 & \num{12140} & \num{88284} & \num{270951} & \num{1548287} & \num{1415} & $2^{45}$ \\
4 & \num{20429} & \num{199770} & \num{466371} & \num{3643611} & \num{2939} & $2^{60}$ \\
5 & \num{41796} & \num{316550} & \num{910593} & \num{5749351} & \num{6152} & $2^{75}$ \\
6 & \num{85866} & \num{753176} & \num{1824031} & \num{13917850} & \num{11078} & $2^{90}$ \\
7 & \num{94706} & \num{460644} & \num{2152597} & \num{17073270} & \num{11812} & $2^{105}$ \\
8 & \num{130422} & \num{968536} & \num{2761632} & \num{23582225} & \num{15211} & $2^{120}$ \\
9 & \num{189815} & \num{3016533} & \num{4010449} & \num{54923198} & \num{20783} & $2^{135}$ \\
\end{tblr}
\caption{For every number of qubits $n$, the average number of blocking clauses, calls to the theory solver and learnt conflict clauses in the CDCL solver (at the end of the execution) that were needed in every instance. We also show the maximum number of clauses and constraint calls that were needed to evaluate an instance.}
\label{tab:metrics}
\end{table}

In Table~\ref{tab:unsatbench}, we record the proportion of UN-PRODSAT certificates that our algorithm was able to produce in our benchmarks. These results align well with theoretical expectations~\cite{prodsat2024}. In Table~\ref{tab:metrics}, one can find metrics related to the number of blocking clauses and the number of calls to the theory solver, in addition to the number of learnt clauses kept by the CDCL solver at the end of the execution. Taking everything into account, these metrics show that the computational cost of our method is, on average, far better than its theoretical worst case. They also demonstrate the effectiveness of our theory solver and of our use of SAT solvers.

\section{Conclusions}
\label{sec:concl}

In this work we have introduced a search-driven clause-learning framework for PRODSAT-QSAT($k$), aimed at certifying the absence of satisfying product states for instances of quantum $k$-SAT with rank-one local projectors. The central idea is to discretise each single-qubit Bloch manifold into finitely many angular regions via a dichotomic partition, and to let a Boolean CDCL core search over these regions while a continuous theory solver checks feasibility of each constraint within the corresponding product region. The theory solver operates directly in the complex plane: by enclosing annular sectors with convex polygons and combining them through Minkowski sums and products, we obtain a computable over-approximation of the set of attainable local amplitudes, from which exclusion of the origin yields a sound UN-PRODSAT conclusion for that region.

The interaction between both components is mediated by $(Q,D)$-valid blocking clauses. When a total assignment is found to be $(Q,D)$-invalid (in the sense that some constraint cannot be satisfied anywhere inside the induced angular region), we produce a clause that excludes that region and add it to the clause database. Lemma~\ref{lem:model} encapsulates the resulting soundness argument: an unsatisfiable conjunction of $(Q,D)$-valid clauses certifies that the original instance is UN-PRODSAT. At the same time, when no constraint can be refuted on a candidate region, our procedure returns MAYBE($A, \rho$), where $A$ and $\rho$ quantify, respectively, the area and maximum modulus squared of the relevant set over-approximations. These quantities yield a heuristic indication of how ``tight'' the remaining uncertainty is: smaller values correspond to stronger evidence that the instance is PRODSAT, while any reported UN-PRODSAT result is a genuine result.

We have also provided a practical algorithm (Algorithm~\ref{alg:full}) together with an implementation in Rust supporting both floating-point arithmetic and exact representations over cyclotomic fields. Our experiments indicate that the method can produce UN-PRODSAT results in qualitative agreement with expected product-state satisfiability regimes. Nevertheless, the worst-case complexity remains exponential in the number of qubits at fixed depth, and our approach inevitably leads to inconclusive outcomes on some instances. In such cases, one may still resort to exact algebraic methods (e.g. Buchberger’s algorithm~\cite{prodsat2024}) to obtain a definitive answer, albeit at a significantly higher worst-case cost.

Several avenues may improve both performance and conclusiveness. On the theory side, tighter enclosures could reduce the prevalence of MAYBE outcomes, particularly by exploiting correlations between the two amplitudes of a single-qubit state that are currently relaxed by independent interval bounds. Extending these ideas beyond the product-state ansatz towards more expressive structured families, remains an appealing direction for future work.

\section*{Acknowledgements}
Joon Hyung Lee acknowledges support from the Dutch Research Council (NWO) as part of the project \emph{Boosting the Search for New Quantum Algorithms with AI (BoostQA)}, file number NGF.1623.23.033, under the research programme \emph{Quantum Technologie 2023}.

Samuel Gonz\'alez-Castillo has been supported by an FPU grant (Training Programme for Academic Staff, FPU22/02020) from the Spanish Ministry of Universities. He was also financially supported by the Ministry of Economic Affairs and Digital Transformation of the Spanish Government through the Spanish National Institute of Cybersecurity (INCIBE) project call for Strategic Projects on Cybersecurity in Spain, and by the European Union through the Recovery, Transformation and Resilience Plan – NextGenerationEU within the framework of the Digital Spain 2026 Agenda.

\begin{center}
\includegraphics[width=0.6\textwidth]{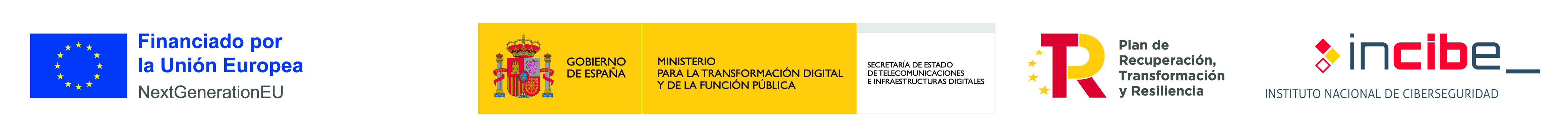}
\end{center}

\bibliographystyle{plain}
\bibliography{refs}

\end{document}